\newtheorem{notation}{Notation}
\newcommand{\arrowflow}{
\tikz \draw[-{to}] (-1pt,0) -- (1pt,0);
}
\newcommand{\flow}[4]{ 
	\draw[#3] (#1) to [#4] node [pos=0.5] {\arrowflow} (#2);
}
\newcommand{\flowdiag}[5]{ 
	\draw[#3] (#1) to [#4] node [#5] {\arrowflow} (#2);
}
\tikzset{fill fraction/.style n args={1}{path picture={
 \fill[black] (path picture bounding box.south west) rectangle
 ($(path picture bounding box.north west)!0.5!(path picture bounding box.north
 east)$);}}
}
\newcommand{\qmatch}[4]{ 
	\node[draw,fill=white,fill fraction={black}{0.5},inner sep=1.2pt,minimum size=3pt,label=above:{\scriptsize #4}] (#1) at (#2,#3) {};
}
\newcommand{\dmatch}[5]{ 
	\node[circle,draw,fill=white,fill fraction={black}{0.5},inner sep=1.2pt,minimum size=3pt,label=#5:{\scriptsize #4}] (#1) at (#2,#3) {};
}
\newcommand{\qinplain}[4]{ 
\node[draw=black,fill=white,inner sep=0pt,minimum size=3pt,label=left:\scriptsize #4] (#1) at (#2,#3) {};
}
\newcommand{\qin}[4]{ 
\flow{{#2,#3}}{$(#2,#3)+(0.8,0)$}{dashed}{};
\qinplain{#1}{#2}{#3}{#4};
}
\newcommand{\dinplain}[4]{ 
\node[circle,draw=black,fill=white,inner sep=0pt,minimum size=3pt,label=left:\scriptsize #4] (#1) at (#2,#3) {};
}
\newcommand{\din}[4]{ 
\flow{{#2,#3}}{$(#2,#3)+(0.8,0)$}{}{};
\dinplain{#1}{#2}{#3}{#4};
}
\newcommand{\qoutplain}[4]{ 
\node[fill=black,inner sep=0pt,minimum size=3pt,label=right:\scriptsize #4] (#1) at (#2+0.8,#3) {};
}
\newcommand{\qout}[4]{ 
\flow{{#2,#3}}{$(#2,#3)+(0.8,0)$}{dashed}{};
\qoutplain{#1}{#2}{#3}{#4};
}
\newcommand{\doutplain}[4]{ 
\node[circle,fill=black,inner sep=0pt,minimum size=3pt,label=right:\scriptsize #4] (#1) at (#2+0.8,#3) {};
}
\newcommand{\dout}[4]{ 
\flow{{#2,#3}}{$(#2,#3)+(0.8,0)$}{}{};
\doutplain{#1}{#2}{#3}{#4};
}
\newcommand{\computonPrimitive}[5]{ 
\draw[draw=black,fill=white] (#1,#2) rectangle ++(#3,#4) node[pos=0.5]{#5};
}
\newcommand{\computonComposite}[4]{ 
\draw[draw=black,fill=black!2] (#1,#2) rectangle ++(#3,#4);
}
\begin{document}

\title{Compositional Control-Driven Boolean Circuits}
\author{Damian Arellanes\inst{1}}
\authorrunning{D. Arellanes}
\institute{Lancaster University, Lancaster LA1 4WA, United Kingdom
\email{damian.arellanes@lancaster.ac.uk}}

\maketitle              

\begin{abstract}
Boolean circuits abstract away from physical details to focus on the logical structure and computational behaviour of digital components. Although such circuits have been studied for many decades, compositionality has been widely ignored or examined in an informal manner, which is a property for combining circuits without delving into their internal structure, while supporting modularity and formal reasoning. In this paper, we address this longstanding theoretical gap by proposing colimit-based operators for compositional circuit construction. We define separate operators for forming sequential, parallel, branching and iterative circuits. As composites encapsulate explicit control flow, a new model of computation emerges which we refer to as (families of) \emph{control-driven Boolean circuits}. We show how this model is at least as powerful as its classical counterpart. In other words, it is able to non-uniformly compute any Boolean function on inputs of arbitrary length. 

\keywords{Boolean circuits \and algebraic composition \and explicit control flow \and non-uniform computation \and models of high-level computation.}
\end{abstract}

\section{Introduction}
\label{sec:Introduction}

Families of Boolean circuits define a non-uniform model of computation in the sense that a collection of circuits can be used to solve a particular problem (e.g., deciding a formal language) \cite{clote_boolean_2002}. To realise any Boolean function, a functionally complete basis of logical connectives is required for circuit construction such as $\{\land,\lor,\lnot\}$, $\{\implies,\lnot\}$ or $\{\uparrow\}$ \cite{wegener_complexity_1987}. 

Although this model of computation can decide arbitrary languages---including undecidable ones---by hard-coding solutions for each input length, it does not provide theoretical underpinnings to enable the compositional construction of Boolean circuits via well-defined algebraic operators that satisfy closure. By composition, we mean forming large complex circuits from simpler ones, without delving into the internals of the composed entities \cite{ghica_complete_2025}. Endowing circuits with compositional capabilities can enhance formal analysis and can allow treating circuits as modular, self-similar black-boxes that can be reused across a wide range of application domains \cite{giannakopoulou_compositional_2018}.

For formal analysis, compositional verification is perhaps the most prominent consequent technique, which allows proving that a property $\phi$ holds for a composite $C$ whenever $\phi$ holds for all the constituents of $C$ and $\phi$ is preserved across all the composition levels underlying $C$ \cite{dardik_recomposition_2024}. 

Although composition can be realised in many different ways, in this paper we follow up on our idea that every (high- or low-level) computing device exhibits either implicit or explicit control flow, no matter how data is exchanged \cite{arellanes_compositional_2026,arellanes_models_2025}. Accordingly, we propose a theory for compositionally constructing Boolean circuits through colimit-based composition operators that define explicit control flow for the computation of Boolean functions in some specific order. We formalise distinct operators for forming sequential, parallel, branching and iterative circuits. Sequencing and parallelising are obvious in terms of behaviour. While branching captures non-deterministic choice, iteration allows the repeated execution of some circuit. For sequencing, parallelising and branching we show relevant algebraic properties in terms of identity, commutativity and associativity. We also show how every Boolean function can be computed by the new model we introduce, which we refer to as (families of) \emph{control-driven Boolean circuits}. 

The rest of the paper is organised as follows. Section \ref{sec:model} introduces the new class of circuits we propose, describes colimit-based operations for their compositional construction and presents operational semantics for circuit execution. Section \ref{sec:equivalence} discusses the relationship between control-driven circuits and their classical counterpart in terms of what can be computed. Section \ref{sec:related-work} analyses related work. Section \ref{sec:conclusions} outlines the final remarks and discusses future directions. 

\section{Control-Driven Boolean Circuits}
\label{sec:model}

The computon model \cite{arellanes_compositional_2026} provides a general framework to specify and compose high-level, control-driven computations which dictate the explicit order in which (low- or high-level) computations occur. Intuitively, it allows forming bipartite-graph-like structures (called computons) where nodes are computation units or typed ports, while edges are information flows from units to ports or vice versa, never between elements of the same kind. A unit encapsulates a low-level computation, whereas a port is a buffer for control signals or other typed data. 

In this paper, we move from the abstract to the concrete by studying a particular class of computons in which computation units are functions that establish an indexed family of NAND operators, while ports are variables holding control or Boolean values only (see Definition \ref{def:cunit}). Accordingly, the set of types on which this class operates consists of the unit type $\mathbbm{1}$ and the Boolean type $\mathbb{B}$. As high-level computations define control flow for the evaluation of NAND operators in some order, we refer to this class of computons as control-driven Boolean circuits (or circuits for short). Definition \ref{def:computon} captures this notion, which is a particular variation of the one given in \cite{arellanes_compositional_2026}.\footnote{Here, $\twoheadrightarrow$ denotes surjection. The unit type $\mathbbm{1}$ has only one inhabitant that represents a control signal. So, variables mapped to $\mathbbm{1}$ are called control variables. Intuitively, a circuit with control variables only at the interface, receives and produces control signals only, which is useful for joining/forking control from/to multiple circuits.}

\begin{definition}[Computation Unit]\label{def:cunit}
A computation unit is the indexed family ${(\uparrow_k)_{k\in \mathbb{N}}}$ where ${\uparrow_0=1}$ is a constant function and, for ${k>0}$, ${\uparrow_k}$ is the $k$-ary NAND operator ${\mathbb{B}^k\rightarrow\mathbb{B}}$ given by ${\uparrow_k(b_1,\ldots,b_k)=\lnot(b_1\land\cdots\land b_k)}$.
\end{definition}

\begin{definition}[Control-Driven Boolean Circuit] \label{def:computon}
A control-driven Boolean circuit $\lambda$ is a 10-tuple ${(V,U,I,O,\Sigma,s,t,\sigma,\tau,c)}$ where:
\begin{itemize}[noitemsep, topsep=1pt]
\item $V$ is a non-empty finite set of variables,
\item $U$ is a finite set of computation units,
\item $I$ is a finite set of input flows linking variables to computation units,
\item $O$ is a finite set of output flows linking computation units to variables,
\item ${\Sigma}\subseteq\{\mathbbm{1},\mathbb{B}\}$ is a non-empty finite set of types,
\item ${s\colon I\rightarrow V}$ is a function that specifies the source variable of each input flow,
\item ${\tau\colon I\twoheadrightarrow U}$ is a function that defines the target unit of each input flow,
\item ${\sigma\colon O\twoheadrightarrow U}$ is a function that specifies the source unit of each output flow,
\item ${t\colon O\rightarrow V}$ is a function defining the target variable of each output flow and
\item ${c\colon V\twoheadrightarrow\Sigma}$ is a function that assigns a type to each variable,
\end{itemize}
such that ${\sigma\restriction_{(c\circ t)^{-1}(\mathbbm{1})}}$ and ${\tau\restriction_{(c\circ s)^{-1}(\mathbbm{1})}}$ are surjective and there is at least one variable ${v \in V\setminus s(I)}$ and at least one variable ${w \in V\setminus t(O)}$ with ${c(v)=\mathbbm{1}=c(w)}$.
\end{definition}

\begin{notation}\label{not:port-types}
Given a circuit $\lambda$, we call ${v\in V}$ a control variable if ${c(v)=\mathbbm{1}}$; otherwise, we call it a Boolean variable. We write ${v\xrightarrow{i}u}$ to express ${s(i)=v}$ and ${\tau(i)=u}$ for some ${i\in I}$, some ${v\in V}$ and some ${u\in U}$. Likewise, we use ${u\xrightarrow{o}w}$ to denote ${\sigma(o)=u}$ and ${t(o)=w}$ for some ${o\in O}$ and some ${w\in V}$. With this, we write ${v_1\xrightarrow{\exists}v_n}$ to express that ${v_1\xrightarrow{i_1}u_1\xrightarrow{o_1}v_2\xrightarrow{i_2}\cdots\xrightarrow{i_{n-1}}u_{n-1}\xrightarrow{o_{n-1}}v_n}$ holds for ${v_1,v_2,\ldots,v_n\in V}$, ${u_1,\ldots,u_{n-1}\in U}$, ${i_1,i_2,\ldots,i_{n-1}\in I}$ and ${o_1,\ldots,o_{n-1}\in O}$.
\end{notation}

\begin{notation}\label{not:pre-post-sets}
If ${u \in U}$ is a unit of a circuit $\lambda$, ${\bullet u:=s(\tau^{-1}(u))}$ and ${u\bullet:=t(\sigma^{-1}(u))}$ denote the sets of variables connected to and from $u$, respectively. Similarly, for ${v\in V}$, ${v\bullet:=\tau(s^{-1}(v))}$ and ${\bullet v:=\sigma(t^{-1}(v))}$. From now on, we use natural numbers as subindices to distinguish between different circuits and their components. If a symbol for a circuit has no subindex, its components have no subindex either.
\end{notation}

A glance at Definition \ref{def:computon} reveals that a circuit $\lambda$ has neither dangling computation units nor dangling flows, as a result of imposing surjectivity on $\sigma$ and $\tau$ as well as totality on $s$ and $t$. Although $c$ can be non-surjective, we decided to treat it otherwise to express that a circuit operates on all its types. Notice in Definition \ref{def:computon} that the surjectivity conditions on ${\sigma\restriction_{(c\circ t)^{-1}(\mathbbm{1})}}$ and ${\tau\restriction_{(c\circ s)^{-1}(\mathbbm{1})}}$ enforce units to always have input and output flows attached to control variables. Although it is possible for $\lambda$ to have no computation units and no flows at all, the presence of typed variables is mandatory. In fact, the last two restrictions from Definition \ref{def:computon} explicitly state that there must be at least one variable with no incoming flows and at least one variable with no outgoing flows mapped to the unit type $\mathbbm{1}$. Implicitly, this means $\Sigma$ is ${\{\mathbbm{1}\}}$ or ${\{\mathbbm{1},\mathbb{B}\}}$, i.e., the unit type is always in $\Sigma$.

Variables with no incoming flows are called invars whereas variables with no outgoing flows are called outvars. Together, they form the interface of a circuit towards the external world, as captured by Definition \ref{def:interface}. As $\mathbbm{1}$ is the type that represents control signals, circuits always have control invars and control outvars. $\mathbb{B}$-typings are reserved for Boolean values ($0$ or $1$ for convenience). 

\begin{definition}[Circuit Interface]\label{def:interface}
The interface of a circuit $\lambda$ is a tuple ${(V^+,V^-)}$ where ${V^+}$ and ${V^-}$ are the sets ${V\setminus t(O)}$ and ${V\setminus s(I)}$, respectively.
\end{definition}

When a circuit exhibits a sequence of flows that take each invar to some outvar, we say soundness is satisfied. This property is formalised in Definition \ref{def:sound}.

\begin{definition}[Sound Circuit]\label{def:sound}
A circuit is sound if every ${v\in s(I)\cup V^+}$ satisfies ${v\xrightarrow{\exists}w}$ for some ${w\in V^-}$.
\end{definition}

More technically, Definition \ref{def:sound} applies not only to invars but to any variable with outgoing flows. We consider these two cases because ${V^+\subseteq s(I)}$ does not hold in general given that $V^+$ and $V^-$ are not necessarily disjoint. A variable ${v\in V^+\cap V^-}$ is called inoutvar and, according to Definition \ref{def:interface}, it has neither incoming nor outgoing flows. Thus, a circuit with at least one inoutvar does not satisfy soundness.

\emph{Trivial Circuits} are a class with all variables in ${V^+\cap V^-}$. Although they are not sound, they are needed for the coherence of our theory. Figure \ref{fig:computon-trivial-primitive}(a) shows that, apart from having all inoutvars, they have no units and no flows at all. An example of a trivial circuit with $2$ control inoutvars and $1$ Boolean inoutvar is shown in Figure \ref{fig:example-basic}(a)

\begin{figure}[H]
\centering
\subcaptionbox{}
{
\begin{tikzcd}[ampersand replacement=\&, row sep=-0.5em]
 \& \emptyset \arrow[dl,twoheadrightarrow,"\sigma"']\arrow[dr, "t"] \& \&  \\
\emptyset \& \& V \arrow[r,twoheadrightarrow,"c"] \& \Sigma \\
 \& \emptyset \arrow[ul,twoheadrightarrow,"\tau"]\arrow[ur, "s"'] \& \& \\
\end{tikzcd}
}
\subcaptionbox{where ${V = s(I) \triangle t(O)}$ and ${|U|=1}$.}[0.55\textwidth]
{
\begin{tikzcd}[ampersand replacement=\&, row sep=-0.5em]
 \& O \arrow[dl,twoheadrightarrow,"\sigma"']\arrow[dr, "t"] \& \&  \\
U \& \& V \arrow[r,twoheadrightarrow,"c"] \& \Sigma \\
 \& I \arrow[ul,twoheadrightarrow,"\tau"]\arrow[ur, "s"'] \& \& \\
\end{tikzcd}
}
\caption{The set-valued diagrams (a) and (b) capture the structure of trivial and primitive circuits, respectively. Here, $\triangle$ denotes symmetric set difference.}
\label{fig:computon-trivial-primitive}
\end{figure}

\begin{remark}\label{rem:unit}
By Definition \ref{def:computon}, we know that the set of variables of any circuit cannot be empty and that there must be at least one control variable. Accordingly, the simplest circuit is a trivial one with exactly one control inoutvar. Up to isomorphism, this circuit is referred to as the \emph{unit circuit}, denoted $\Lambda$.
\end{remark}

\begin{figure}[H] 
 \subcaptionbox{Trivial circuit}[0.2\textwidth]
 {
\begin{tikzpicture}
\node[draw,fill=white,fill fraction={black}{0.5},inner sep=1.2pt,minimum size=3pt,label=left:{\scriptsize $v_1$}] (q1) at (0,0.7) {};
\node[draw,fill=white,fill fraction={black}{0.5},inner sep=1.2pt,minimum size=3pt,label=left:{\scriptsize $v_2$}] (qi) at (0,0.4) {};
\node[circle,draw,fill=white,fill fraction={black}{0.5},inner sep=1.2pt,minimum size=3pt,label=left:{\scriptsize $v_3$}] (dj) at (0,0.1) {};
\node(del2) at (0,0){};
\end{tikzpicture}
}\hspace{0.45cm}
 \subcaptionbox{Primitive circuit NOT}[0.3\textwidth]
 {
 \begin{tikzpicture}
 \computonPrimitive{0.6}{0}{0.4}{0.8}{$\uparrow_1$}

 \node[draw=black,fill=white,inner sep=0pt,minimum size=3pt,label=left:{\scriptsize $v_1$}] (1q0) at (0,0.6) {};\flow{1q0}{$(1q0)+(0.6,0)$}{dashed}{};
 \dinplain{1i1}{0}{0.2}{$v_2$};\flow{1i1}{$(0,0.2)+(0.6,0)$}{}{};
 
 \node[fill=black,inner sep=0pt,minimum size=3pt,label=right:{\scriptsize $v_3$}] (2q1) at (1.6,0.6) {};\flow{{1,0.6}}{$(1,0.6)+(0.6,0)$}{dashed}{};
 \doutplain{1o1}{0.8}{0.2}{$v_4$};\flow{{1,0.2}}{1o1}{}{};
 \end{tikzpicture}
 }
 \subcaptionbox{Composite circuit AND}
 {
 \begin{tikzpicture}
\computonComposite{0.4}{-0.1}{3.2}{1};
\qmatch{3q}{2}{0.53}{$v_4$};\flow{{1.3,0.53}}{3q}{dashed}{};\flow{3q}{{2.7,0.53}}{dashed}{};
\dmatch{3d}{2}{0.27}{$v_5$}{below};\flow{{1.3,0.27}}{3d}{}{};\flow{3d}{{2.7,0.27}}{}{};

\computonPrimitive{0.9}{0}{0.4}{0.8}{$\uparrow_2$}

\qinplain{1q1}{0.2}{0.7}{$v_1$};\flow{1q1}{{0.9,0.7}}{dashed}{};
\dinplain{1i1}{0.2}{0.4}{$v_2$};\flow{1i1}{{0.9,0.4}}{}{};
\dinplain{1i2}{0.2}{0.1}{$v_3$};\flow{1i2}{{0.9,0.1}}{}{};

\computonPrimitive{2.7}{0}{0.4}{0.8}{$\uparrow_1$}
\qoutplain{2q1}{3}{0.7}{$v_6$};\flow{{3.1,0.7}}{2q1}{dashed}{};
\doutplain{2o1}{3}{0.4}{$v_7$};\flow{{3.1,0.4}}{2o1}{}{};
 \end{tikzpicture}
 }
 {
\begin{tikzpicture}
\begin{scope}[xshift=0.6cm,yshift=-0.15cm]
\draw[draw=black,fill=white] (0,0.4) rectangle ++(0.12,0.25);
\node[inner sep=0pt,minimum size=0pt,label=right:{\scriptsize Computation unit}] at (0.2,0.5) {};
\draw[densely dashed] (3.1,0.5) to node[pos=0.5]{\arrowflow} (3.6,0.5);
\node[inner sep=0pt,minimum size=3pt,label=right:{\scriptsize Control flow}] at (3.6,0.5) {};
\node[draw=black,fill=white,inner sep=0pt,minimum size=3pt,label=right:{\scriptsize Control invar}] at (6,0.5) {};
\node[fill=black,inner sep=0pt,minimum size=3pt,label={right:{\scriptsize Control outvar}}] at (8.5,0.5) {};
\end{scope}

\begin{scope}
\draw (0,0) to node[pos=0.5]{\arrowflow} (0.3,0);
\node[inner sep=0pt,minimum size=3pt,label=right:{\scriptsize Boolean flow}] at (0.2,0){};
\node[circle,draw=black,fill=white,inner sep=0pt,minimum size=3pt,label={right:{\scriptsize Boolean invar}}] at (2.3,0) {};
\node[circle,fill=black,inner sep=0pt,minimum size=3pt,label={right:{\scriptsize Boolean outvar}}] at (4.7,0) {};
\node[circle,draw,fill=white,fill fraction={black}{0.5},inner sep=0pt,minimum size=3pt,label=right:{\scriptsize Boolean inoutvar}] at (7.2,0) {};
\node[draw,fill=white,fill fraction={black}{0.5},inner sep=0pt,minimum size=3pt,label=right:{\scriptsize Control inoutvar}] at (10,0) {};
\end{scope}
\end{tikzpicture}
 }%
 \caption{A trivial circuit and two control-driven logical connectives. Intuitively, (a) has no behaviour because it has no units. Although variables are labelled with some type through a function $c$, we label them with identifiers for convenience.} 
 \label{fig:example-basic}
\end{figure}

Besides trivial objects, we also have \emph{primitive circuits} which are sound entities with only invars and outvars, attached to exactly one unit (see Proposition~\ref{prop:functional-sound}). The structure of a primitive circuit is given by the set-valued diagram from Figure \ref{fig:computon-trivial-primitive}(b), which has constraints on the sets of units and variables. An example with $1$ control invar, $1$ Boolean invar, $1$ control outvar and $1$ Boolean outvar is displayed in Figure \ref{fig:example-basic}(b).

\begin{proposition}\label{prop:functional-sound}
Every primitive circuit is sound.
\end{proposition}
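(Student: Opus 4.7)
The plan is to show that in a primitive circuit every variable in $s(I)\cup P^+$ reaches an outvar via a path of length two that passes through the unique computation unit. I would first unpack the structural hypotheses from Figure \ref{fig:computon-functional}(a). The equation ${P=s(I)\triangle t(O)}$ forces ${s(I)\cap t(O)=\emptyset}$, hence ${P^+=P\setminus t(O)=s(I)}$ and dually ${P^-=P\setminus s(I)=t(O)}$. In particular ${s(I)\cup P^+=s(I)}$, so the universally quantified set in Definition \ref{def:sound} collapses to just the sources of input flows, and (as a side observation) a primitive circuit contains no inoutvars.

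Next I would record that both $I$ and $O$ are non-empty. Definition \ref{def:computon} demands at least one control invar in ${P\setminus t(O)=s(I)}$ and at least one control outvar in ${P\setminus s(I)=t(O)}$, so $s(I)$ and $t(O)$ are non-empty, hence so are $I$ and $O$. Let $u$ be the unique computation unit, which exists because ${|U|=1}$. For an arbitrary ${p\in s(I)}$, pick ${i\in I}$ with ${s(i)=p}$; the surjectivity of $\tau$ onto $\{u\}$ forces ${\tau(i)=u}$. Since $O\neq\emptyset$, pick any ${o\in O}$; surjectivity of $\sigma$ similarly gives ${\sigma(o)=u}$, and setting ${q:=t(o)\in t(O)=P^-}$ produces the chain ${p\xrightarrow{i}u\xrightarrow{o}q}$, which by Notation \ref{not:flows} witnesses ${p\xrightarrow{\exists}q}$ with $q$ an outvar.

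I do not expect any genuine obstacle here: the entire argument is a direct unpacking of definitions. The only point demanding care is recognising that the symmetric-difference condition makes $P^+$ and $P^-$ disjoint, so that the quantifier in the definition of soundness reduces to a single well-defined case rather than branching over possibly overlapping sets of invars and output-flow sources.
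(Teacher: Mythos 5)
Your proposal is correct and follows essentially the same route as the paper's own proof: use ${P=s(I)\triangle t(O)}$ to identify ${P^+=s(I)}$ and ${P^-=t(O)}$, then route any ${p\in s(I)}$ through the unique computation unit via surjectivity of $\tau$ and $\sigma$ to some ${q\in t(O)}$. Your explicit observation that the symmetric difference forces ${s(I)\cap t(O)=\emptyset}$ is a slightly cleaner way of stating what the paper compresses into its chain of equivalences, but the argument is the same.
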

\begin{proof}
Assuming that $\lambda$ is a primitive circuit with ${v\in s(I)\cup V^+}$, we first show ${s(I)=V^+}$ as follows: ${v \in s(I) \iff v\notin t(O)}$ since ${V=s(I)\triangle t(O) \iff v \in V^+}$ by Definition \ref{def:interface}. So, it suffices to show only for ${v\in s(I)}$ which entails there exists some ${i\in I}$ where ${s(i)=v}$. As $\tau$ is total surjective and ${|U|=1}$, ${\tau(i)=u}$ for the unique ${u\in U}$. Similarly, as $\sigma$ is total surjective, there must be some ${o\in O}$ such that ${\sigma(o)=u}$ and, therefore, ${t(o)=w}$ must hold for some ${w\in V}$ by the totality of $t$. Using the fact ${V=s(I)\triangle t(O)}$ and Definition \ref{def:interface}, we obtain ${w\in t(O)\iff w\notin s(I)\iff w\in V^-}$. As we have just deduced the existence of ${v\xrightarrow{i}u\xrightarrow{o}w}$, using Notation \ref{not:port-types} we conclude ${v\xrightarrow{\exists}w}$ for ${v\in s(I)\cup V^+}$ and ${w\in V^-}$.\qed
\end{proof}

Primitive circuits are important for our theory since each of them encapsulates a family of NAND operators. Operationally, a primitive circuit does not compute the whole family, but it chooses one operator at a time depending on the number $k$ of Boolean invars. For example, in Figure \ref{fig:example-basic}(b), a unary NAND operator would be chosen from the family established by the unique computation unit. Section \ref{sec:dynamics} describes operational semantics in more detail.

Interestingly, when ${k=1}$, NAND behaves as NOT as per Definition \ref{def:cunit}. When ${k>1}$, it might seem that NAND could lead to inconsistencies, given that there is no explicit invar order. However, this is not entirely true since NAND is built upon AND, an operation that satisfies commutativity. Although NAND maps a number of inputs to a single output, primitive circuits are able to produce multiple outputs from multiple inputs. In Section \ref{sec:dynamics}, we will see that all Boolean outvars take the same value from the chosen operator. This feature can be particularly relevant to replicate values onto multiple circuits (for subsequent processing). 

Overall, trivial and primitive circuits are the basic building blocks to compositionally form more complex circuits which encapsulate an explicit control flow structure. Such complex entities are called composite circuits and can be sequential, parallel, branching or iterative. Before introducing them, we present the notion of a circuit morphism which, intuitively, allows embedding a circuit into another (see Definition \ref{def:morphism}). In our theory, there are distinguished morphisms called in- and out-adjoints, which identify all the invars and all the outvars of a circuit $\lambda$, respectively (see Definition \ref{def:adjoints}). As the domain of all the in-adjoints of $\lambda$ are isomorphic to each other, we say ``the'' in-adjoint of $\lambda$. The same holds for out-adjoints. 

\begin{definition}[Circuit Morphism]\label{def:morphism}
A circuit morphism ${\alpha\colon\lambda_1\rightarrow\lambda_2}$ is a tuple $(\alpha_V,\alpha_U,\alpha_I,\alpha_O,\alpha_\Sigma)$ of total functions ${\alpha_V\colon V_1\rightarrow V_2}$, ${\alpha_U\colon U_1\rightarrow U_2}$, ${\alpha_I\colon I_1\rightarrow I_2}$, ${\alpha_O\colon O_1\rightarrow O_2}$ and ${\alpha_\Sigma\colon\Sigma_1\hookrightarrow\Sigma_2}$, satisfying the condition ${\vec{i}(\alpha)\cup\vec{o}(\alpha)\subseteq V_1^+\cup V_1^-}$ with ${\vec{i}(\alpha):=\{v\in V_1\mid\bullet\alpha_V(v)\setminus\alpha_V(\bullet v)\neq\emptyset\}}$ and $\vec{o}(\alpha):=\{v\in V_1\mid\alpha_V(v)\bullet\setminus\alpha_V(v\bullet)\neq\emptyset\}$, and making the following diagram commute:
\[
\begin{tikzcd}
 V_1 \arrow[rrrr,bend left=2,"\alpha_V"]\arrow[drr,twoheadrightarrow,"c_1"'{xshift=15pt,yshift=-1pt}] & & & & V_2 \arrow[dl,twoheadrightarrow,"c_2"{xshift=-5pt}] & \\
 [-0.9cm]
 & & \Sigma_1 \arrow[r,hook,"\alpha_\Sigma"]  & \Sigma_2 & & \\
 [-0.8cm]
  & I_1 \arrow[rrrr,"\alpha_I"']\arrow[dd,twoheadrightarrow,"\tau_1"'{yshift=7pt}]\arrow[uul,"s_1"{xshift=3pt}] & & & & I_2 \arrow[dd,twoheadrightarrow,"\tau_2"]\arrow[uul,"s_2"'{yshift=-1pt}] \\
 [-0.6cm] 
 O_1 \arrow[rrrr,"\alpha_O"{xshift=15pt}]\arrow[uuu,"t_1"]\arrow[dr,twoheadrightarrow,"\sigma_1"'] & & & & O_2 \arrow[uuu,"t_2"{yshift=5pt}]\arrow[dr,twoheadrightarrow,"\sigma_2"{xshift=-5pt,yshift=1pt}] & \\
 [-0.6cm]
  & U_1 \arrow[rrrr,"\alpha_U"] & & & & U_2 \\
\end{tikzcd}
\]
\end{definition}

\begin{remark}
If $\alpha$ and $\beta$ are circuit morphisms, their composite ${\alpha\circ\beta}$ is given by:
\begin{small}
\[
(\alpha_V,\alpha_U,\alpha_I,\alpha_O,\alpha_\Sigma)\circ (\beta_V,\beta_U,\beta_I,\beta_O,\beta_\Sigma):=(\alpha_V\circ\beta_V,\alpha_U\circ\beta_U,\alpha_I\circ\beta_I,\alpha_O\circ\beta_O,\alpha_\Sigma\circ\beta_\Sigma)
\]
\end{small}
\end{remark}

\begin{definition}[Adjoint Morphism]\label{def:adjoints}
An adjoint ${\lambda^\square}$ of a circuit $\lambda$ is a circuit monomorphism ${\lambda_0\rightarrow\lambda}$ where $\lambda_0$ is a trivial circuit with ${\lambda^\square(V_0)=V^\square}$ and ${\square=\{+,-\}}$. If ${\square=+}$, it is called \emph{in-adjoint}; otherwise, it is called \emph{out-adjoint}.
\end{definition}

The conditions imposed on a circuit morphism ${\alpha\colon\lambda_1\rightarrow\lambda_2}$ enforce $\lambda_1$ to interact with $\lambda_2$ only at the boundaries. More precisely, ${\vec{i}(\alpha)\subseteq V_1^+\cup V_1^-}$ implies that, if ${v_1\in V_1}$ is mapped to a variable ${v_2\in V_2}$ connected from computation units not considered by ${\alpha_U}$, then $v_1$ must necessarily be either invar or outvar in $\lambda_1$. The condition ${\vec{o}(\alpha)\subseteq V_1^+\cup V_1^-}$ is similar, but it applies to $\lambda_2$-variables connected to computation units outside the image of $\alpha_U$. These two conditions ensure that invars (or outvars) of $\lambda_2$ in the image of $\alpha$ are always mapped from invars (or outvars) of $\lambda_1$ (see Proposition \ref{prop:interface-preservation}). Although it is possible to demote interface variables to inoutvars, the latter can never be promoted to invars or outvars in $\lambda_2$. Variables cannot change their type through $\alpha$ by the fact ${\alpha_\Sigma}$ is an inclusion.

\begin{proposition}\label{prop:interface-preservation}
For any circuit morphism ${\alpha\colon\lambda_1\rightarrow\lambda_2}$, we have ${\alpha^{-1}_V(V_2^+)\subseteq V_1^+}$ and ${\alpha^{-1}_V(V_2^-)\subseteq V_1^-}$.
\end{proposition}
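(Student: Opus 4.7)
The plan is to establish each containment by contrapositive, relying solely on the commutativity of the two squares of Definition~\ref{def:morphism} that involve $t$ and $s$. The side condition $\vec{i}(\alpha)\cup\vec{o}(\alpha)\subseteq P_1^+\cup P_1^-$ built into Definition~\ref{def:morphism} turns out not to be needed for this particular statement; the reverse-image preservation is already forced by commutativity.

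For the first inclusion $\alpha_P^{-1}(P_2^+)\subseteq P_1^+$, I would take any $p\in P_1$ with $\alpha_P(p)\in P_2^+$ and argue contrapositively: if $p\notin P_1^+$, then $p\in t_1(O_1)$, so $p=t_1(o)$ for some $o\in O_1$. Commutativity of the $t$-face of the diagram, namely $\alpha_P\circ t_1 = t_2\circ\alpha_O$, then yields $\alpha_P(p)=t_2(\alpha_O(o))\in t_2(O_2)$, so by Definition~\ref{def:interface} $\alpha_P(p)\notin P_2^+$, contradicting the hypothesis. Hence $p\in P_1^+$ as required.

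The second inclusion $\alpha_P^{-1}(P_2^-)\subseteq P_1^-$ follows by a symmetric argument using the $s$-face $\alpha_P\circ s_1 = s_2\circ\alpha_I$: if $p\in s_1(I_1)$, then $p=s_1(i)$ for some $i\in I_1$, so $\alpha_P(p)=s_2(\alpha_I(i))\in s_2(I_2)$, contradicting $\alpha_P(p)\in P_2^-$.

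There is no real obstacle here; the proposition is a direct consequence of the commutativity conditions embedded in the definition of a circuit morphism, and no appeal to surjectivity of $\sigma$, $\tau$ or to the $\vec{i},\vec{o}$ constraint is necessary. Those features are what ensure the \emph{forward} (image) counterpart discussed in the paragraph preceding the proposition, namely that interface variables of $\lambda_2$ in the image of $\alpha$ come from interface variables of $\lambda_1$; the preimage statement given here is purely diagrammatic.
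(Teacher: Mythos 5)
Your proof is correct and follows essentially the same route as the paper's: a contrapositive argument from the commutativity of the $t$-square (resp.\ $s$-square) of Definition~\ref{def:morphism}, with no appeal to the $\vec{i},\vec{o}$ side condition. Your closing observation that the statement is purely diagrammatic matches what the paper's proof actually uses.
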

\begin{proof}
We only show ${\alpha^{-1}_V(V_2^+)\subseteq V_1^+}$ by contrapositive since the proof of the other is analogous. For this, assume ${v_1\in V\setminus V_1^+}$ so there is some ${o_1\in O_1}$ with ${t_1(o_1)=v_1}$ (see Definition \ref{def:interface}). Using the commutativity property of $\alpha$, we have ${t_2(\alpha_O(o_1))=\alpha_V(t_1(o_1))=\alpha_V(v_1)}\implies{\alpha_V(v_1)\notin V_2^+}$ (according to Definition \ref{def:interface}) ${\implies v_1\notin\alpha_V^{-1}(V_2^+)}$. As ${v_1\notin V_1^+\implies v_1\notin\alpha_V^{-1}(V_2^+)}$ is logically equivalent to ${v_1\in\alpha_V^{-1}(V_2^+)\implies v_1\in V_1^+}$, we conclude ${\alpha^{-1}_V(V_2^+)\subseteq V_1^+}$, as required.\qed
\end{proof}

As all computation units establish the same family of NAND operators (see Definition \ref{def:cunit}), a circuit morphism always ensures behaviour preservation. In other words, no matter the mapping done by the $U$-component of a circuit morphism $\alpha\colon\lambda_1\rightarrow\lambda_2$, it is always true that $u_1\in U_1$ is essentially the same as $\alpha_U(u_1)\in U_2$.

Circuits are combined via colimit operations based on pushout and coproduct constructions in the category of control-driven circuits and their morphisms (cf. \cite{arellanes_compositional_2026}). A pushout is the colimit of a span of circuit morphisms which, according to Definition \ref{def:pushout}, has the purpose of effectively merging two circuits along a common one. We say it is common because it can be perceived as an interface that embodies the parts that are shared between the circuits being merged. More formally, each $A$-component of a pushout is the disjoint union of the $A$-component of one base circuit and the $A$-component of the other, with elements identified by the equivalence relation given by the corresponding span. 

\begin{definition}[Pushout]\label{def:pushout}
The pushout $\lambda_1+_{\lambda_0}\lambda_2$ of a span ${\rho:=\lambda_1\xleftarrow{\alpha}\lambda_0\xrightarrow{\beta}\lambda_2}$ of circuit morphisms is obtained by computing quotient sets componentwise. For example, the set of variables of $\lambda_1+_{\lambda_0}\lambda_2$ is given by $V_1 \sqcup V_2/\sim$ where $\sim$ is the finest equivalence relation given by ${\alpha_V(v)\sim\beta_V(v)}$ for all ${v\in V_0}$. The pushout of $\rho$ exists only if ${\alpha(\vec{i}(\beta))\cup\alpha(\vec{o}(\beta))\subseteq P_1^+\cup P_1^-}$ and ${\beta(\vec{i}(\alpha))\cup\beta(\vec{o}(\alpha))\subseteq P_2^+\cup P_2^-}$.
\end{definition}

The other colimit construction we consider is that of coproduct which basically serves to specify side-by-side circuits through monomorphisms that canonically ``inject'' circuits into a more complex one, as captured by Definition \ref{def:coproduct}. The universal properties of both coproduct and pushout follow from \cite{arellanes_compositional_2026}.

\begin{definition}[Coproduct]\label{def:coproduct}
The coproduct $\lambda_1+\lambda_2$ of circuits $\lambda_1$ and $\lambda_2$ is obtained by computing disjoint union componentwise such that the two canonical coproduct morphisms associated with $\lambda_1+\lambda_2$ are circuit monomorphisms.
\end{definition}

Below, we describe particular operators, built upon pushout and coproduct, for composing circuits into complex composites that encapsulate a precise control flow structure for evaluating concrete NAND operators in some specific order. 

\subsection{Sequential Circuits}

A \emph{sequential circuit} is constructed by pairing some/all the outvars of one circuit with some/all the invars of another via a sequentiable span of circuit morphisms. 

\begin{definition}[Sequentiable Span]\label{def:sequencing-span}
A span ${\lambda_1\xleftarrow{\alpha}\lambda_0\xrightarrow{\beta}\lambda_2}$ is sequentiable if $\lambda_0$ is a trivial circuit, and $\alpha$ and $\beta$ are monomorphisms satisfying ${\alpha_V(V_0)\subseteq V_1^-}$ and ${\beta_V(V_0)\subseteq V_2^+}$, respectively. If ${\alpha_V(V_0)=V_1^-}$ and ${\beta_V(V_0)=V_2^+}$, the span is totally sequentiable; otherwise, it is partially sequentiable.
\end{definition}

\begin{notation}\label{not:sequencing-span}
For Definition \ref{def:sequencing-span}, we call $\lambda_1$ and $\lambda_2$ the left and right operands, respectively. From now on, every time we write a sequentiable span, we will treat the bases of the left and right legs as left and right operands, correspondingly. 
\end{notation}

Definition \ref{def:sequencing-span} says that a sequentiable span requires a trivial circuit $\lambda_0$ at the apex, apart from two circuits ($\lambda_1$ and $\lambda_2$) at the base, with the restriction that (some or all) the invars of $\lambda_2$ must be identified with (some or all) the outvars of $\lambda_1$ injectively. This identification is not symmetric so, if outvars of $\lambda_2$ need to be identified with invars of $\lambda_1$, a different span would be required. In this case, $\lambda_1$ and $\lambda_2$ would correspondingly be the right and left operands, as per Notation \ref{not:sequencing-span}.

When all the outvars of $\lambda_1$ match all the invars of $\lambda_2$, the pushout of the corresponding sequentiable span is called a \emph{total sequential circuit}; otherwise, it is called a \emph{partial sequential circuit}. The corresponding formalisation is given in Definition \ref{def:sequential-computon} and a example of a total sequential composite is presented in Figure \ref{fig:example-basic}(c).

\begin{definition}[Sequential Circuit]\label{def:sequential-computon}
Let ${\rho:=\lambda_1\xleftarrow{\alpha}\lambda_0\xrightarrow{\beta}\lambda_2}$ be a sequentiable span of circuit morphisms. If $\rho$ is totally sequentiable, its pushout is called a total sequential circuit ${\lambda_1\unrhd_{\rho}\lambda_2}$. Otherwise, its pushout is a called a partial sequential circuit ${\lambda_1\rhd_{\rho}\lambda_2}$. The operations to form total and partial sequential circuits are called total sequencing and partial sequencing, respectively.
\end{definition}

Definition \ref{def:sequential-computon} says that, in our theory of control-driven Boolean circuits, we include a mechanism to sequence any two circuits regarding of the Boolean values they require or produce. This means that, even if the outvars of a circuit do not match all the invars of another (as in \emph{total sequencing}), they can still form a sequential circuit. This sort of composition, referred to as \emph{partial sequencing}, is possible since control variables always match because they have the same type (i.e., $\mathbbm{1}$). 

No matter whether a partial or a total sequential circuit is being formed, the morphisms induced by the corresponding pushout operation are mono (see Proposition \ref{prop:sequencing-mono}). This property follows from the fact that the components of every sequentiable span are monomorphisms too (see Definition \ref{def:sequencing-span}).

\begin{proposition}\label{prop:sequencing-mono}
If ${\lambda_1\xrightarrow{\gamma}\lambda_3\xleftarrow{\theta}\lambda_2}$ is the cospan induced by the pushout of a sequentiable span ${\lambda_1\xleftarrow{\alpha}\lambda_0\xrightarrow{\beta}\lambda_2}$, then $\gamma$ and $\theta$ are circuit monomorphisms.    
\end{proposition}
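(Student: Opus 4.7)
The plan is to reduce the claim to the componentwise construction of Definition \ref{def:pushout} and then invoke the standard Set-theoretic fact that pushouts of injective functions are injective. Unpacking, a circuit monomorphism is just a circuit morphism whose five underlying functions are all injective, so I would verify injectivity of $\gamma_X$ and $\theta_X$ separately for each $X\in\{P,U,I,O,\Sigma\}$; the fact that $\gamma$ and $\theta$ are themselves circuit morphisms comes for free from the universal property of the pushout.

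Three of the five components are essentially vacuous. Because $\lambda_0$ is a trivial circuit (cf.\ Figure \ref{fig:computon-trivial}(a)) it has $U_0=I_0=O_0=\emptyset$, so the equivalence relation generated on $X_1\sqcup X_2$ for $X\in\{U,I,O\}$ is trivial and the pushout collapses to the bare disjoint union, into which $X_1$ and $X_2$ inject canonically. For the $\Sigma$-component, all three type sets sit inside $\{\mathbbm{1},\mathbb{B}\}$ and the two span legs are inclusions by Definition \ref{def:morphism}, so the pushout is realised as $\Sigma_1\cup\Sigma_2$ with $\gamma_\Sigma$ and $\theta_\Sigma$ again inclusions; both are thus injective.

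The only component requiring real work is $P$, where identifications genuinely occur. Here Definition \ref{def:sequencing-span} hands us that $\alpha_P$ and $\beta_P$ are injective, and I would invoke the usual zig-zag characterization of the equivalence relation on $P_1\sqcup P_2$. Concretely: if $\gamma_P(p)=\gamma_P(q)$ for $p,q\in P_1$, there must be a finite alternating chain in $P_1\sqcup P_2$ linking $p$ to $q$, with each step given by a single generator $\alpha_P(r)\sim\beta_P(r)$ for some $r\in P_0$. Injectivity of $\beta_P$ at each $P_2$-landing pins down the mediating element of $P_0$, and injectivity of $\alpha_P$ then propagates this uniqueness back into $P_1$, so the chain collapses to $p=q$. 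The symmetric argument handles $\theta_P$. The only (modest) obstacle is writing out this zig-zag cleanly; everything else follows immediately from the componentwise pushout construction together with the monomorphism conditions built into Definition \ref{def:sequencing-span}.
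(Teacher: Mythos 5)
Your proposal is correct and follows essentially the same route as the paper: the $\Sigma$, $U$, $I$, $O$ components are dispatched exactly as in the paper's proof (inclusion for $\Sigma$, emptiness of the apex's flow and unit sets for the rest), and the $P$-component is handled by exploiting injectivity of the span legs. Your zig-zag collapse is just a more explicit rendering of the paper's two-case contradiction argument for $\gamma_P$, so no substantive difference.
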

\begin{proof}
We prove only for $\gamma$, since the other is analogous. For this, we first note ${\gamma_\Sigma}$ is always injective because Definition \ref{def:morphism} says it is an inclusion function. For ${\gamma_I}$, $I_1$-elements are never identified with $I_2$-elements by the fact ${\lambda_0}$ is a trivial circuit with no flows at all, i.e., ${I_0=\emptyset}$. Thus, by Definition \ref{def:pushout}, ${\gamma_I}$ is necessarily injective since $I_3$ has an equivalence class for each $I_1$-element. The same property holds for ${\gamma_O}$ and ${\gamma_U}$ considering ${O_0=\emptyset=U_0}$.

Now, assume for contradiction that ${\gamma_V}$ is not injective so there are variables ${v_1,w_1\in V_1}$ with ${\gamma_V(v_1)=\gamma_V(w_1)}$. As ${V_3=V_1 \sqcup V_2/\sim}$, we have two possible scenarios: 
\begin{enumerate}
\item There is no $v_2\in V_2$ with ${\theta_V(v_2)=\gamma_V(v_1)=\gamma_V(w_1)}$. In this case, $v_1$ and $w_1$ form individual equivalence classes in $V_3$ for satisfying the reflexivity of $\sim$. Thus, contradicting our assumption that ${\gamma_V(v_1)=\gamma_V(w_1)}$. \label{prop:sequencing-mono-1}
\item There is some ${v_2\in V_2}$ where ${\theta_V(v_2)=\gamma_V(v_1)=\gamma_V(w_1)}$. In this case, the commutativity of pushouts entails there are ${v_0,w_0\in V_0}$ with ${\alpha_V(v_0)=v_1}$, ${\alpha_V(w_0)=w_1}$ and ${\beta_V(v_0)=v_2=\beta_V(w_0)}$. Clearly contradicting that ${\beta_V}$ is injective (see Definition \ref{def:sequencing-span}). \label{prop:sequencing-mono-2}
\end{enumerate}
As all their components are injective functions, we conclude $\gamma$ is a circuit monomorphism.\qed
\end{proof}

Intuitively, the interface of a sequential circuit preserves all the invars and all the outvars from the left and right operands, respectively (see Proposition \ref{prop:interface-sequential}). In the case of total sequencing, the invars of a total sequential circuit are identified with all the invars of the left operand, whereas the outvars are matched with all the outvars of the right operand (see Proposition \ref{prop:interface-sequential-total}).

\begin{proposition}\label{prop:interface-sequential}
Assume ${\rho:=\lambda_1\xleftarrow{\alpha}\lambda_0\xrightarrow{\beta}\lambda_2}$ is a sequentiable span of circuit morphisms. If ${\lambda_1\xrightarrow{\gamma}\lambda_3\xleftarrow{\theta}\lambda_2}$ is the cospan induced by the pushout of $\rho$, then ${\gamma_V(V_1^+)\subseteq V_3^+}$ and ${\theta_V(V_2^-)\subseteq V_3^-}$.
\end{proposition}
\begin{proof}
We just show ${\gamma_V(V_1^+)\subseteq V_3^+}$ since the proof of ${\theta_V(V_2^-)\subseteq V_3^-}$ is completely analogous. For this, assume for contrapositive ${v_3\notin V_3^+}$ so there is some ${o_3\in O_3}$ with ${t_3(o_3)=v_3}$. As ${O_3=O_1+_{O_0}O_2}$ according to Definition \ref{def:pushout}, we have three possible cases:
\begin{enumerate}
\item There exclusively is some ${o_1 \in O_1}$ with ${\gamma_O(o_1)=o_3}$ so, by the commutativity of $\gamma$, ${\gamma_V(t_1(o_1))=t_3(\gamma_O(o_1))=t_3(o_3)=v_3}$. As ${t_1(o_1)\notin V_1^+}$ as per Definition \ref{def:interface}, ${v_3\notin\gamma_V(V_1^+)}$.\label{prop:interface-sequential-1}
\item There exclusively is some ${o_2 \in O_2}$ with ${\theta_O(o_2)=o_3}$. In this case, the proof is analogous to that of \ref{prop:interface-sequential-1}.\label{prop:interface-sequential-2}
\item There are ${o_1\in O_1}$ and ${o_2\in O_2}$ for which ${\gamma_O(o_1)=o_3=\theta_O(o_2)}$. This case never holds since the apex of $\rho$ is a trivial circuit with no flows at all.\label{prop:interface-sequential-3}
\end{enumerate}
Proving \ref{prop:interface-sequential-3} does not hold and that ${v_3\notin V_3^+\implies v_3\notin\gamma_V(V_1^+)}$ for \ref{prop:interface-sequential-1} and \ref{prop:interface-sequential-2} entails ${\gamma_V(V_1^+)\subseteq V_3^+}$, as required.\qed
\end{proof}

\begin{proposition}\label{prop:interface-sequential-total}
Assume ${\rho:=\lambda_1\xleftarrow{\alpha}\lambda_0\xrightarrow{\beta}\lambda_2}$ is a totally sequentiable span of circuit morphisms. If ${\lambda_1\xrightarrow{\gamma}\lambda_3\xleftarrow{\theta}\lambda_2}$ is the cospan induced by the pushout of $\rho$, then ${\gamma_V(V_1^+)=V_3^+}$ and ${\theta_V(V_2^-)=V_3^-}$.
\end{proposition}
\begin{proof}
We only prove for ${\gamma_V(V_1^+)=V_3^+}$ since the other is completely analogous. For this, we let ${v_3\in V_3^+}$. Using Proposition \ref{prop:interface-preservation} and the fact ${V_3=V_1+_{V_0}V_2}$, we have three cases:
\begin{enumerate}
\item $v_3$ is exclusively identified with $\lambda_1$-invars so ${v_3\in\gamma_V(V_1^+)}$.\label{prop:interface-sequential-total-1}
\item $v_3$ is exclusively identified with $\lambda_2$-invars. Here, let ${v_2\in V_2^+}$ with ${\theta_V(v_2)=v_3}$. As $\rho$ is totally sequentiable, ${\beta_V(V_0)=V_2^+}$ so ${v_2\in V_2^+\iff v_2\in\beta_V(V_0)}$. Clearly, a contradiction to the fact $v_3$ is not identified with any $\lambda_1$-variable.\label{prop:interface-sequential-total-2}
\item there exist ${v_1\in V_1^+}$ and ${v_2\in V_2^+}$ where ${\gamma_V(v_1)=v_3=\theta_V(v_2)}$. As ${v_1\in V_1^+}$, ${v_3\in\gamma_V(V_1^+)}$ holds directly.\label{prop:interface-sequential-total-3}
\end{enumerate}
Disproving \ref{prop:interface-sequential-total-2} and proving ${v_3\in V_3^+\implies v_3\in\gamma_V(V_1^+)}$ for \ref{prop:interface-sequential-total-1} and \ref{prop:interface-sequential-total-3} implies that ${V_3^+\subseteq\gamma_V(V_1^+)}$. Thus, we simply use Proposition \ref{prop:interface-sequential} to conclude ${\gamma_V(V_1^+)=V_3^+}$.\qed
\end{proof}

Although sequencing is not commutative because the order of circuit execution matters, such a composition operation satisfies the identity property via the unit circuit (see Theorem \ref{th:sequencing-identity}). Theorem \ref{th:sequencing-associativity} says that only total sequencing satisfies associativity. Partial composition does not satisfy this property since the invars and outvars from the corresponding operands can remain unmatched, leading to non-isomorphic structures upon grouping.  

\begin{theorem}\label{th:sequencing-identity}
Up to isomorphism, the unit circuit $\Lambda$ serves as the left- and right-identity for (total or partial) sequencing.
\end{theorem}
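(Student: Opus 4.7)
The plan is to prove both identity properties by direct construction of the pushout and exhibiting a canonical isomorphism, exploiting the fact that $\Lambda$ contributes essentially nothing to the colimit beyond a single control variable that gets identified away.

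First, for left-identity, I would fix an arbitrary sequentiable span $\rho=\Lambda\xleftarrow{\alpha}\lambda_0\xrightarrow{\beta}\lambda$ and pin down the shape of $\lambda_0$. Since $\Lambda$ has a unique variable $p$ with $P_\Lambda^-=\{p\}$ and $\alpha_P$ is injective with image in $P_\Lambda^-$, together with the fact that $\lambda_0$ is trivial and non-empty (Remark \ref{rem:unit}), we get $|P_0|=1$, say $P_0=\{p_0\}$ with $c_0(p_0)=\mathbbm{1}$, and $U_0=I_0=O_0=\emptyset$. Writing $p':=\beta_P(p_0)\in P_\lambda^+$, I can then compute the pushout $\Lambda+_{\lambda_0}\lambda$ componentwise: the equivalence class $\{p,p'\}$ becomes the sole non-singleton in $P_3$, while $U_3\cong U_\lambda$, $I_3\cong I_\lambda$, $O_3\cong O_\lambda$ and $\Sigma_3=\{\mathbbm{1}\}\cup\Sigma_\lambda=\Sigma_\lambda$ (by Definition \ref{def:computon}). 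The source, target, $\sigma$, $\tau$ and $c$ maps of $\lambda_3$ are then forced by the universal property to agree, up to the above identifications, with those of $\lambda$.

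Next, I would define the candidate isomorphism $\varphi\colon\lambda_3\rightarrow\lambda$ by taking the representative in $\lambda$ of each equivalence class (sending $[p]=[p']\mapsto p'$ and $[q]\mapsto q$ for $q\in P_\lambda\setminus\{p'\}$), and using the obvious bijections on $U$, $I$, $O$ and $\Sigma$. Commutativity of the circuit-morphism diagram in Definition \ref{def:morphism} holds by construction, since the structural maps of $\lambda_3$ were inherited from $\lambda$, and the boundary condition $\vec{i}(\varphi)\cup\vec{o}(\varphi)\subseteq P_3^+\cup P_3^-$ is automatic for a variable bijection that preserves incidence. The inverse morphism $\varphi^{-1}$ is constructed analogously, so $\lambda_3\cong\lambda$.

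For right-identity I would run the same argument on $\lambda\xleftarrow{\alpha}\lambda_0\xrightarrow{\beta}\Lambda$, deducing again $|P_0|=1$ and identifying the single $\lambda_0$-variable with a control outvar of $\lambda$ and with $p$. The partial versus total distinction requires only a short remark: in the total case, the extra equalities $\alpha_P(P_0)=P_\Lambda^-$ and $\beta_P(P_0)=P_\lambda^+$ (or the symmetric pair) are compatible with $|P_0|=1$ precisely when the relevant interface of $\lambda$ is a singleton, and in that case the pushout computation above still yields the same isomorphism, so the identity property holds wherever total sequencing with $\Lambda$ is actually definable. The main obstacle I anticipate is purely bookkeeping: namely, carefully verifying that the quotient structure maps $s_3,t_3,\sigma_3,\tau_3,c_3$ produced by the pushout coincide with those of $\lambda$ after identifying $[p]$ with $p'$, so that $\varphi$ genuinely satisfies the commuting diagram of Definition \ref{def:morphism} rather than just a pointwise bijection; once this check is discharged, everything else is a routine consequence of the emptiness of $U_\Lambda$, $I_\Lambda$ and $O_\Lambda$.
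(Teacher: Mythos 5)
Your proof is correct and follows essentially the same route as the paper's: pin down $\lambda_0$ as a single control variable (forced by injectivity of $\alpha_P$ into the singleton $P_\Lambda^-$), compute the pushout componentwise, and observe that the only identification glues the lone $\Lambda$-variable onto a variable of $\lambda$, so that $\Lambda\rhd_\rho\lambda\cong\lambda$. Your explicit construction of the isomorphism $\varphi$ and your remark that total sequencing with $\Lambda$ is only definable when the relevant interface of $\lambda$ is a singleton are somewhat more careful than the paper's terse treatment, which simply asserts that the partial case generalises the total one and that the quotient on variables is "trivially" isomorphic to $P_1$.
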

\begin{proof}
We just prove for partial sequencing since such an operation is a generalisation of the other. For this, assume ${\Lambda\rhd_{\rho_1}\lambda_1}$ is the pushout of ${\Lambda \xleftarrow{\alpha} \lambda_0 \xrightarrow{\beta} \lambda_1}$. As the left operand is the unit circuit, it follows that ${\lambda_0}$ is essentially the same as $\Lambda$ so as to satisfy the mono property imposed by Definition \ref{def:sequencing-span}. Consequently, $\rho_1$ has the following shape:
\[
\begin{tikzcd}
\{\mathbbm{1}\} & \{v\} \arrow[l] & \emptyset\arrow[l]\arrow[r] & \emptyset & \emptyset\arrow[l]\arrow[r] & \{v\}\arrow[r] & \{\mathbbm{1}\} \\
\{\mathbbm{1}\} \arrow[u, hook, "\alpha_\Sigma"]\arrow[d, hook, "\beta_\Sigma"'] & \{v_0\}\arrow[l]\arrow[u, "\alpha_V"]\arrow[d, "\beta_V"'] & \emptyset\arrow[l]\arrow[r]\arrow[u, "\alpha_O"]\arrow[d, "\beta_O"'] & \emptyset\arrow[u, "\alpha_U"]\arrow[d, "\beta_U"'] & \emptyset\arrow[l]\arrow[r]\arrow[u, "\alpha_I"]\arrow[d, "\beta_I"'] & \{v_0\}\arrow[r]\arrow[u, "\alpha_V"]\arrow[d, "\beta_V"'] & \{\mathbbm{1}\}\arrow[u, hook, "\alpha_\Sigma"]\arrow[d, hook, "\beta_\Sigma"'] \\
\Sigma_1 & V_1 \arrow[l] & O_1\arrow[l]\arrow[r] & U_1 & I_1\arrow[l]\arrow[r] & V_1\arrow[r] & \Sigma_1
\end{tikzcd} 
\]
To show ${\Lambda\rhd_{\rho_1}\lambda_1\cong\lambda_1}$, it suffices to demonstrate ${\{v\}+_{\{v_0\}}V_1\cong V_1}$ which is trivially true by the fact all the variables of the left operand $\Lambda$ are identified in ${\Lambda\rhd_{\rho_1}\lambda_1}$. The proof for ${\lambda_1\rhd_{\rho_2}\Lambda\cong\lambda_1}$ is completely symmetric.\qed
\end{proof}

\begin{theorem}\label{th:sequencing-associativity}
Total sequencing is associative up to isomorphism.
\end{theorem}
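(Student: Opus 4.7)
The plan is to show that for any two totally sequentiable spans
\[
\rho_{12}\colon\lambda_1 \xleftarrow{\alpha} \lambda_0 \xrightarrow{\beta} \lambda_2, \qquad \rho_{23}\colon\lambda_2 \xleftarrow{\alpha'} \lambda_0' \xrightarrow{\beta'} \lambda_3,
\]
the two bracketings $(\lambda_1\unrhd_{\rho_{12}}\lambda_2)\unrhd_{\rho_L}\lambda_3$ and $\lambda_1\unrhd_{\rho_R}(\lambda_2\unrhd_{\rho_{23}}\lambda_3)$ are isomorphic as circuits, where $\rho_L$ and $\rho_R$ are the canonical outer spans obtained by post-composing with the coproduct morphisms induced by each inner pushout.

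First I would verify that the outer spans are well-formed and totally sequentiable. Writing the cospan induced by $\rho_{12}$ as $\lambda_1 \xrightarrow{\gamma} \lambda_{12} \xleftarrow{\theta} \lambda_2$, Proposition \ref{prop:sequencing-mono} yields that $\theta$ is mono, and Proposition \ref{prop:interface-sequential-total} gives $\theta_P(P_2^-)=P_{12}^-$. Hence I can take
\[
\rho_L\ :=\ \lambda_{12} \xleftarrow{\theta\circ\alpha'} \lambda_0' \xrightarrow{\beta'} \lambda_3,
\]
whose apex is trivial, whose legs are composites of monomorphisms, and whose left leg hits exactly $P_{12}^-$, because $\alpha'_P(P_0')=P_2^-$ (by total sequentiability of $\rho_{23}$) and $\theta_P(P_2^-)=P_{12}^-$. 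A symmetric argument, using Proposition \ref{prop:interface-sequential-total} on the cospan $\lambda_2 \xrightarrow{\gamma'} \lambda_{23} \xleftarrow{\theta'} \lambda_3$ to obtain $\gamma'_P(P_2^+)=P_{23}^+$, produces a totally sequentiable $\rho_R := \lambda_1 \xleftarrow{\alpha} \lambda_0 \xrightarrow{\gamma'\circ\beta} \lambda_{23}$.

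Next I would invoke the pasting property for pushouts: both iterated pushouts compute the colimit of the shared zig-zag diagram
\[
\lambda_1 \xleftarrow{\alpha} \lambda_0 \xrightarrow{\beta} \lambda_2 \xleftarrow{\alpha'} \lambda_0' \xrightarrow{\beta'} \lambda_3,
\]
and therefore agree up to a unique canonical isomorphism. Since Definition \ref{def:pushout} takes pushouts componentwise as quotient sets, this can be checked concretely at each component $X\in\{P,U,I,O,\Sigma\}$: both bracketings yield the quotient of $X_1\sqcup X_2\sqcup X_3$ by the equivalence relation generated jointly by the spans $\rho_{12}$ and $\rho_{23}$, which is insensitive to the order in which the identifications are imposed.

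The main obstacle is lifting this componentwise agreement to a bona fide circuit isomorphism in the sense of Definition \ref{def:morphism}. Commutativity of the induced bijection with the structural maps $s,t,\sigma,\tau,c$ on both sides follows from the universal property of the respective pushouts applied at each component, together with the fact that the original structural maps from $\rho_{12}$ and $\rho_{23}$ are shared by both bracketings. The boundary condition $\vec{i}(\cdot)\cup\vec{o}(\cdot)\subseteq P^+\cup P^-$ is then automatic: because the induced component maps are bijections commuting with $s,t,\sigma,\tau$, the identities $\bullet\alpha_P(p)=\alpha_U(\bullet p)$ and $\alpha_P(p)\bullet=\alpha_U(p\bullet)$ hold, so $\vec{i}$ and $\vec{o}$ are empty. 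This yields the desired isomorphism between the two bracketings and completes the proof.
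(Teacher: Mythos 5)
Your proposal is correct and follows essentially the same route as the paper's proof: both construct the outer spans by post-composing the inner spans' legs with the morphisms induced by the inner pushouts, verify total sequentiability via Propositions \ref{prop:sequencing-mono} and \ref{prop:interface-sequential-total}, and conclude by the pasting property of pushouts computed componentwise in finite sets. Your additional explicit check that the resulting componentwise bijections assemble into a circuit isomorphism (commutation with the structural maps and emptiness of $\vec{i}$ and $\vec{o}$) is a detail the paper leaves implicit, but it does not change the argument.
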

\begin{proof}
If we assume that ${\lambda_1 \unrhd_{\rho_1}\lambda_2}$ and ${\lambda_2 \unrhd_{\rho_2}\lambda_3}$ are total sequential circuits with ${\rho_1:=\lambda_1 \xleftarrow{\alpha} \lambda_0 \xrightarrow{\beta} \lambda_2}$ and ${\rho_2:=\lambda_2 \xleftarrow{\phi} \lambda_4 \xrightarrow{\psi} \lambda_3}$, we have the following commutative diagram:
\[
\begin{tikzcd}
 & \lambda_4 \arrow[d, "\phi"']\arrow[r, "\psi"] & \lambda_3 \arrow[d, "\eta"] \\
\lambda_0 \arrow[d, "\alpha"']\arrow[r, "\beta"] & \lambda_2 \arrow[d, "\theta"]\arrow[r, "\zeta"] & \lambda_2 \unrhd_{\rho_2} \lambda_3 \arrow[d, "\mu"] \\
\lambda_1 \arrow[r, "\gamma"'] & \lambda_1 \unrhd_{\rho_1} \lambda_2 \arrow[r, "\kappa"'] & \lambda_5 
\end{tikzcd}
\]
where the cospans $(\gamma,\theta)$ and $(\zeta,\eta)$ are induced by the pushout of $\rho_1$ and $\rho_2$, respectively. Composing circuit morphisms horizontally and vertically, we obtain the following commutative diagrams:

\[
\begin{tikzcd}
\lambda_0 \arrow[d, "\alpha"']\arrow[r, "\zeta\circ\beta"] & \lambda_2 \unrhd_{\rho_2} \lambda_3 \arrow[d, "\mu"] & & \lambda_4 \arrow[d, "\theta\circ\phi"']\arrow[r, "\psi"] & \lambda_3 \arrow[d, "\mu\circ\eta"] \\
\lambda_1 \arrow[r, "\kappa\circ\gamma"'] & \lambda_5 & & \lambda_1 \unrhd_{\rho_1} \lambda_2 \arrow[r, "\kappa"'] & \lambda_5
\end{tikzcd}
\]

To show every $\lambda_0$-variable is mapped to an invar of $\lambda_2 \unrhd_{\rho_2} \lambda_3$, consider the following chain of equivalences: $v \in\zeta_V(\beta_V(V_0))\iff v \in \zeta_V(V_2^+)$ (because $\beta_V(V_0)=V_2^+$ by the fact that $\rho_1$ is totally sequentiable) $\iff v$ is an invar of $\lambda_2 \unrhd_{\rho_2} \lambda_3$ (as per Proposition \ref{prop:interface-sequential-total}).

Using Proposition \ref{prop:sequencing-mono} and the fact that $\rho_1$ is totally sequentiable, we have that $\beta$ and $\zeta$ are circuit monomorphisms. Consequently, ${\zeta\circ\beta}$ is also a monomorphism. Having ${\alpha_V(V_0)=V_1^-}$, because $\rho_1$ is totally sequentiable, we use Definition \ref{def:sequencing-span} to deduce ${\rho_3:=\lambda_1 \xleftarrow{\alpha} \lambda_0 \xrightarrow{\zeta\circ\beta} \lambda_2 \unrhd_{\rho_2} \lambda_3}$ is totally sequentiable. In this construction, $\lambda_0$ is the apex circuit whilst $\lambda_1$ and $\lambda_2\unrhd_{\rho_2}\lambda_3$ are the left and right operands, respectively. Using Definition \ref{def:sequential-computon}, we determine that the pushout of $\rho_3$ is the total sequential circuit ${\lambda_1 \unrhd_{\rho_3}(\lambda_2\unrhd_{\rho_2}\lambda_3)}$. A similar reasoning can be used to establish that ${(\lambda_1 \unrhd_{\rho_1}\lambda_2)\unrhd_{\rho_4}\lambda_3}$ is the pushout of the span ${\rho_4:=\lambda_1 \unrhd_{\rho_1}\lambda_2\xleftarrow{\theta\circ\phi} \lambda_4 \xrightarrow{\psi} \lambda_3}$. Therefore, ${\lambda_1 \unrhd_{\rho_3}(\lambda_2\unrhd_{\rho_2}\lambda_3)\cong\lambda_5\cong (\lambda_1 \unrhd_{\rho_1}\lambda_2)\unrhd_{\rho_4}\lambda_3}$, i.e., total sequencing is associative up to isomorphism.\qed
\end{proof}

\subsection{Parallel Circuits}

Asynchronous parallelising is a composition operation, which allows the formation of a \emph{parallel circuit} for the simultaneous execution of two circuits. Achieving this can directly be done by simply computing a coproduct (see Definition \ref{def:pasync}). 

\begin{definition}\label{def:pasync}
A parallel circuit ${\lambda_1+\lambda_2}$ is a coproduct of $\lambda_1$ and $\lambda_2$.
\end{definition}

As coproduct is trivially commutative and associative, the operation to form a parallel circuit satisfies these properties (see Theorem \ref{th:parallelising-associativity-commutativity}). Identity is not met since this would require the existence of empty circuits which, by Definition \ref{def:computon}, is impossible given that every circuit must have at least one variable. 

\begin{theorem}\label{th:parallelising-associativity-commutativity}
Parallelising is associative and commutative up to isomorphism.
\end{theorem}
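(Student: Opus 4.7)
The plan is to reduce the claim to the standard fact that disjoint union of sets is commutative and associative up to canonical bijection, and then to lift this pointwise to the level of circuits. Since Definition \ref{def:pasync} declares a parallel circuit to be a coproduct, and Definition \ref{def:coproduct} stipulates that coproducts are computed componentwise as disjoint unions, every component set of $\lambda_1 + \lambda_2$ is of the form $A_1 \sqcup A_2$ for $A \in \{P, U, I, O, \Sigma\}$, and the source, target and typing functions are induced in the obvious way by the universal property of disjoint union.

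For commutativity, I would first exhibit, for each component $A \in \{P,U,I,O,\Sigma\}$, the canonical swap bijection $A_1 \sqcup A_2 \to A_2 \sqcup A_1$. I would then check that the quintuple of these swaps constitutes a valid circuit morphism $\lambda_1 + \lambda_2 \to \lambda_2 + \lambda_1$, i.e.\ that it makes the diagram of Definition \ref{def:morphism} commute. This is immediate because the structure maps on each side are themselves defined by pairing of the structure maps of $\lambda_1$ and $\lambda_2$; the swap on the larger components is compatible with the swap on the smaller ones. The condition $\vec{i}(\alpha)\cup\vec{o}(\alpha)\subseteq P^+_1\cup P^-_1$ is vacuous here, since the swap is a component-wise bijection and its image covers all neighbours of every variable. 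Since each swap is its own inverse, the resulting morphism is an isomorphism, giving $\lambda_1+\lambda_2\cong\lambda_2+\lambda_1$.

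For associativity, I would apply the same recipe to the two triple coproducts $(\lambda_1+\lambda_2)+\lambda_3$ and $\lambda_1+(\lambda_2+\lambda_3)$. Unfolding Definition \ref{def:coproduct} twice in each case, every component becomes a nested disjoint union $(A_1 \sqcup A_2)\sqcup A_3$ or $A_1\sqcup(A_2\sqcup A_3)$, and the canonical reassociation bijection gives the required componentwise isomorphism. Again, one just needs to check that these bijections commute with $s, t, \sigma, \tau, c$ on each side, which is forced by how the structure maps are inherited from the three operands.

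The only point that requires even mild care is verifying that the assembled quintuple really is a circuit morphism in the sense of Definition \ref{def:morphism}, rather than merely a family of bijections, and that all six set-valued squares in the diagram commute. I expect this to be the main (but entirely routine) obstacle: it amounts to observing that the universal arrows out of a disjoint union are determined by their restrictions to each summand, so two such arrows agree as soon as they agree on summands, which is the case here by construction. The monomorphism condition required in Definition \ref{def:coproduct} for the canonical injections is preserved under composition with these swap/reassociation isomorphisms, so the outcome is indeed an isomorphism of circuits and the theorem follows.
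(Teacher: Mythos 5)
Your proposal is correct and takes essentially the same route as the paper, which simply observes that coproduct is computed by componentwise disjoint union of finite sets and that disjoint union is commutative and associative up to canonical bijection. You merely spell out the details the paper leaves implicit (checking that the swap/reassociation bijections respect the structure maps and the boundary condition of Definition \ref{def:morphism}), which is a harmless and indeed welcome elaboration.
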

\begin{proof}
These two properties follow immediately from the fact that coproduct is built upon disjoint union of finite sets which is a well-known operation that satisfies both associativity and commutativity.\qed
\end{proof}

\subsection{Branching Circuits}

A \emph{branching circuit} is a composite that identifies invars with invars and outvars with outvars from two circuits, meaning it can only be formed from circuits with isomorphic in-adjoint domains and isomorphic out-adjoint domains (see Definition \ref{def:branching}). As a consequence of this constraint, the invars and outvars of a branching circuit always match the invars and outvars of the circuits being branched over (see Proposition \ref{prop:adjoints-for-branching}). The operation to form a branching composite can always be computed in the category of circuits by relying on the fact that the pushout of a span of adjoint morphisms can always be computed (cf. \cite{arellanes_compositional_2026}). Consequently, isomorphisms reside in such a category.

\begin{definition}[Branching Circuit]\label{def:branching}
Let $\rho$ be the diagram formed by the spans ${\lambda_2\xleftarrow{\lambda_2^+}\lambda_0\xrightarrow{\lambda_3^+}\lambda_3}$ and ${\lambda_2\xleftarrow{\lambda_2^-}\lambda_1\xrightarrow{\lambda_3^-}\lambda_3}$ of adjoint morphisms. A branching circuit $\lambda_2?_{\rho}\lambda_3$ is the colimit of $\rho$, computed as $\lambda_2+_{\lambda_0+\lambda_1}\lambda_3$. By Definition \ref{def:adjoints}, $\lambda_0$ and $\lambda_1$ are necessarily trivial circuits.
\end{definition}

\begin{proposition}\label{prop:adjoints-for-branching}
If $\rho$ is the diagram formed by the spans ${\lambda_2\xleftarrow{\lambda_2^+}\lambda_0\xrightarrow{\lambda_3^+}\lambda_3}$ and ${\lambda_2\xleftarrow{\lambda_2^-}\lambda_1\xrightarrow{\lambda_3^-}\lambda_3}$, $\lambda_0$ and $\lambda_1$ are the domains of the in- and out-adjoints of ${\lambda_2?_{\rho}\lambda_3}$, respectively.
\end{proposition}
\begin{proof}
Assuming ${\alpha\colon\lambda_0\rightarrow\lambda_0+\lambda_1}$ and ${\beta\colon\lambda_1\rightarrow\lambda_0+\lambda_1}$ are the canonical injections induced by the coproduct ${\lambda_0+\lambda_1}$, we obtain the following diagram:
\[
\begin{tikzcd}
 & \lambda_0\arrow[ddl,bend right=30,"\lambda_2^+"']\arrow[ddrrr,bend left=30,"\lambda_3^+"{xshift=2em,yshift=-3.5em}]\arrow[dr,"\alpha"{yshift=-0.2em}] & & \lambda_1\arrow[ddr,bend left=30,"\lambda_3^-"]\arrow[ddlll,bend right=30,"\lambda_2^-"{xshift=-3.3em,yshift=-2em}]\arrow[dl,"\beta"'{xshift=-0.5em,yshift=-0.5em}] & \\
 & & \lambda_0+\lambda_1\arrow[dll,dash pattern=on 4pt off 2pt,"{(\lambda_2^+,\lambda_2^-)}"]\arrow[drr,dash pattern=on 4pt off 2pt,"{(\lambda_3^+,\lambda_3^-)}"'] & & \\
 \lambda_2 & & & & \lambda_3 
\end{tikzcd}
\]
where ${(\lambda_2^+,\lambda_2^-)}$ and ${(\lambda_3^+,\lambda_3^-)}$ are the unique morphisms derived from the universal property of coproducts. This property follows from the fact that coproduct is computed in the category of finite sets and total functions. 

Now, by Definition \ref{def:branching}, we know that the colimit ${\lambda_4:=\lambda_2?_{\rho}\lambda_3}$ of the above diagram corresponds to the pushout of ${(\lambda_2^+,\lambda_2^-)}$ and ${(\lambda_3^+,\lambda_3^-)}$. If ${\gamma\colon\lambda_2\rightarrow\lambda_4}$ and ${\theta\colon\lambda_3\rightarrow\lambda_4}$ are the morphisms induced by such a pushout, the following equalities must hold by commutativity: 

\[
\gamma\circ\lambda_2^+=\gamma\circ(\lambda_2^+,\lambda_2^-)\circ\alpha=\theta\circ(\lambda_3^+,\lambda_3^-)\circ\alpha=\theta\circ\lambda_3^+
\]
\[
\theta\circ\lambda_3^-=\gamma\circ(\lambda_2^+,\lambda_2^-)\circ\beta=\theta\circ(\lambda_3^+,\lambda_3^-)\circ\beta=\gamma\circ\lambda_2^-
\]

With the above in mind, we now prove as follows: ${v_4\in V_4^+ \iff }$ there is some ${v_2\in V_2^+}$ and some ${v_3\in V_3^+}$ where ${\gamma_V(v_2)=v_4=\theta_V(v_3)}$ (see Proposition \ref{prop:interface-preservation}) $\iff$ there is some ${v\in V_0+V_1}$ where ${(\lambda_2^+,\lambda_2^-)(v)=v_2}$ and ${(\lambda_3^+,\lambda_3^-)(v)=v_3}$ (as per the commutativity property of pushouts). As ${v_2\in V_2^+}$ and ${v_3\in V_3^+}$, Definition \ref{def:adjoints} entails that commutativity only holds if ${v_2=\lambda_2^+(v_0)}$ and ${v_3=\lambda_3^+(v_0)}$ for some ${v_0\in V_0}$. Using the above equations, we have:
\begin{small}
\[
(\gamma_V\circ\lambda_2^+)(v_0)=(\gamma_V\circ(\lambda_2^+,\lambda_2^-)\circ\alpha_V)(v_0)=(\theta_V\circ(\lambda_3^+,\lambda_3^-)\circ\alpha_V)(v_0)=(\theta_V\circ\lambda_3^+)(v_0)=v_4
\]
\end{small}

Thus, proving that the morphisms ${\gamma\circ\lambda_2^+\colon\lambda_0\rightarrow\lambda_4}$, ${\gamma\circ(\lambda_2^+,\lambda_2^-)\circ\alpha\colon\lambda_0\rightarrow\lambda_4}$, ${\theta\circ(\lambda_3^+,\lambda_3^-)\circ\alpha\colon\lambda_0\rightarrow\lambda_4}$ and ${\theta\circ\lambda_3^+\colon\lambda_0\rightarrow\lambda_4}$ are in-adjoints of ${\lambda_4}$. A similar reasoning can be used to show that ${\theta\circ\lambda_3^-\colon\lambda_1\rightarrow\lambda_4}$, ${\gamma\circ(\lambda_2^+,\lambda_2^-)\circ\beta\colon\lambda_1\rightarrow\lambda_4}$, ${\theta\circ(\lambda_3^+,\lambda_3^-)\circ\beta\colon\lambda_1\rightarrow\lambda_4}$ and ${\gamma\circ\lambda_2^-\colon\lambda_1\rightarrow\lambda_4}$ are out-adjoints of $\lambda_4$.\qed
\end{proof}

Although branching is both commutative and associative (by Theorems \ref{th:branching-commutativity} and \ref{th:branching-associativity}), the identity law is not satisfied in general since that would require a trivial circuit identified with both invars and outvars; thus, potentially violating the constraint that every circuit must have invars and outvars (see Definition \ref{def:computon}). 

\begin{theorem}\label{th:branching-commutativity}
Branching is commutative up to isomorphism.
\end{theorem}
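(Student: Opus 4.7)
The plan is to reduce the statement to the well-known symmetry of pushouts. Given a branchial circuit $\lambda_2 ?_\rho \lambda_3$ built from the diagram $\rho$ consisting of the adjoint spans $\lambda_2 \xleftarrow{\lambda_2^+} \lambda_0 \xrightarrow{\lambda_3^+} \lambda_3$ and $\lambda_2 \xleftarrow{\lambda_2^-} \lambda_1 \xrightarrow{\lambda_3^-} \lambda_3$, I would first construct the mirror diagram $\rho'$ formed by the spans $\lambda_3 \xleftarrow{\lambda_3^+} \lambda_0 \xrightarrow{\lambda_2^+} \lambda_2$ and $\lambda_3 \xleftarrow{\lambda_3^-} \lambda_1 \xrightarrow{\lambda_2^-} \lambda_2$. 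This reuses exactly the same four adjoint monomorphisms but swaps which circuit is considered the left and which the right operand. Because Definition \ref{def:adjoints} characterises an adjoint as an intrinsic property of a single morphism (a monomorphism whose domain is a trivial circuit and whose image equals the corresponding interface), $\rho'$ remains a legal configuration and the branchial circuit $\lambda_3 ?_{\rho'} \lambda_2$ is well-defined.

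Next, I would invoke Definition \ref{def:branchial} to present both constructions as pushouts over the common coproduct $\lambda_0+\lambda_1$. Each branchial circuit is the pushout of the same pair of copairings $(\lambda_2^+,\lambda_2^-)\colon \lambda_0+\lambda_1 \to \lambda_2$ and $(\lambda_3^+,\lambda_3^-)\colon \lambda_0+\lambda_1 \to \lambda_3$ obtained from the universal property of the coproduct, which does not depend on the order of its summands. The only difference between the two pushouts is the order in which these two legs are listed, so both amount to $\lambda_2 +_{\lambda_0+\lambda_1} \lambda_3$ and $\lambda_3 +_{\lambda_0+\lambda_1} \lambda_2$ with identical data.

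Finally, I would appeal to the symmetry of pushouts: $A +_C B \cong B +_C A$. In the componentwise construction of Definition \ref{def:pushout}, this symmetry is immediate because the disjoint union and the finest equivalence relation generated by a span are manifestly symmetric in their two arguments; this yields a canonical bijection in each of the components $P$, $U$, $I$, $O$ and $\Sigma$, and these assemble into a circuit isomorphism because all structure maps are determined by the same quotient data in either order. Consequently, $\lambda_2 ?_\rho \lambda_3 \cong \lambda_3 ?_{\rho'} \lambda_2$, establishing commutativity up to isomorphism. The only mildly delicate step is confirming that swapping the legs preserves the adjoint-span structure of $\rho$, i.e., that the roles of in- and out-adjoints are not disturbed by the swap; once this is checked, the rest is routine pushout symmetry.
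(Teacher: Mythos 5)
Your proposal is correct and follows essentially the same route as the paper's proof: both reduce commutativity of branching to the symmetry of pushouts computed componentwise in the category of finite sets. Your version is in fact slightly more explicit than the paper's, since you construct the mirror diagram $\rho'$ and verify that the swapped spans remain legal adjoint spans, a point the paper leaves implicit.
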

\begin{proof}
If ${\lambda_1?_{\rho_1}\lambda_2}$ and ${\lambda_2?_{\rho_2}\lambda_1}$ are two branching circuits, Definition \ref{def:branching} says each of them must be the pushout of circuit morphisms. As these morphisms are built upon total functions between finite sets (see Definition \ref{def:morphism}), it is clear that pushout is computed in the category of finite sets and total functions. In this category, it is well-known that pushout is commutative up to unique isomorphism. Hence, ${\lambda_1?_{\rho_1}\lambda_2\cong\lambda_2?_{\rho_2}\lambda_1}$ must hold trivially.\qed
\end{proof}

\begin{theorem}\label{th:branching-associativity}
Branching is associative up to isomorphism.
\end{theorem}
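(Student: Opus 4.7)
The plan is to exhibit both $(\lambda_1 \mathbin{?}_{\rho_1} \lambda_2) \mathbin{?}_{\rho_2} \lambda_3$ and $\lambda_1 \mathbin{?}_{\rho_3} (\lambda_2 \mathbin{?}_{\rho_4} \lambda_3)$ as colimits of one and the same wide diagram of circuit morphisms, and then invoke the uniqueness of colimits up to isomorphism. The natural candidate is the wide span $D$ whose apex is a coproduct $\lambda^+ + \lambda^-$ of a common in-adjoint domain $\lambda^+$ and a common out-adjoint domain $\lambda^-$, with three legs running to $\lambda_1$, $\lambda_2$ and $\lambda_3$ respectively.

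First, I would observe that for either nested branching to even be defined, Definition \ref{def:branchial} forces the three circuits $\lambda_1,\lambda_2,\lambda_3$ to share (up to isomorphism) the same in-adjoint domain and the same out-adjoint domain. By Proposition \ref{prop:adjoints-for-branchial}, any binary branchial circuit built from two of them has its in-adjoint and out-adjoint domains isomorphic to those of its constituents, so the outer branchings in both groupings are pushouts over precisely the same apex circuit $\lambda^+ + \lambda^-$. This lets me fix a single wide diagram $D$ that underlies all four branchings appearing in the statement.

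Next, I would argue that computing $(\lambda_1 \mathbin{?}_{\rho_1} \lambda_2) \mathbin{?}_{\rho_2} \lambda_3$ amounts to iterating two pushouts over the common apex $\lambda^+ + \lambda^-$: the inner pushout amalgamates $\lambda_1$ with $\lambda_2$, and by Proposition \ref{prop:adjoints-for-branchial} the outer adjoint morphisms from $\lambda_3$ factor through $\lambda^+ + \lambda^-$ in the way prescribed by $D$. Since the circuit category inherits its colimits componentwise from the category of finite sets and total functions, where amalgamated coproducts over a fixed base are well known to be associative, this iterated pushout is canonically isomorphic to the colimit of the whole $D$. The symmetric argument shows the other grouping is also a colimit of $D$, and uniqueness of colimits yields the desired isomorphism.

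The main obstacle I expect is bookkeeping: carefully checking that the adjoint morphisms forming $\rho_2$ and $\rho_4$ in the outer branchings really are the composites arising from Proposition \ref{prop:adjoints-for-branchial} applied to the inner pushouts, so that the four outer legs (two for invars, two for outvars) factor through the shared apex $\lambda^+ + \lambda^-$ coherently with $D$. Once this coherence is established the categorical argument is routine, since wide pushouts over a common base in $\mathbf{Set}$ are well known to be associative up to unique isomorphism, and this transfers componentwise to control-driven Boolean circuits.
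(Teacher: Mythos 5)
Your proposal is correct and follows essentially the same route as the paper: both arguments hinge on the fact that all four branchings are pushouts over the common apex $\lambda_0+\lambda_1$ (the coproduct of the shared in- and out-adjoint domains), use Proposition \ref{prop:adjoints-for-branchial} to check that the outer legs factor coherently through that apex, and then invoke associativity of amalgamated sums computed componentwise in the category of finite sets. Your framing via a single wide diagram and uniqueness of colimits is a mild repackaging of the paper's explicit iterated-pushout computation, not a different argument.
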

\begin{proof}
Assume ${\lambda_2?_{\rho_1}\lambda_3}$ and ${\lambda_3?_{\rho_2}\lambda_4}$ are branching circuits where:
\begin{itemize}
\item $\rho_1$ is the diagram formed by the spans ${\lambda_2\xleftarrow{\lambda_2^+}\lambda_0\xrightarrow{\lambda_3^+}\lambda_3}$ and ${\lambda_2\xleftarrow{\lambda_2^-}\lambda_1\xrightarrow{\lambda_3^-}\lambda_3}$, and
\item $\rho_2$ is the diagram formed by the spans ${\lambda_3\xleftarrow{\lambda_3^+}\lambda_0\xrightarrow{\lambda_4^+}\lambda_4}$ and ${\lambda_3\xleftarrow{\lambda_3^-}\lambda_1\xrightarrow{\lambda_4^-}\lambda_4}$.
\end{itemize}
By Definition \ref{def:branching}, ${\lambda_2?_{\rho_1}\lambda_3}$ and ${\lambda_3?_{\rho_2}\lambda_4}$ are constructed through the operations ${\lambda_2+_{\lambda_0+\lambda_1}\lambda_3}$ and ${\lambda_3+_{\lambda_0+\lambda_1}\lambda_4}$, respectively. Accordingly, we have the following commutative diagram:

\begin{center}
\begin{tikzcd}
 & \lambda_0 \arrow[dr, "\alpha"]\arrow[dl, opacity=0.4, "\lambda_2^+"']\arrow[drrr, bend left=15, opacity=0.4,"\lambda_4^+"{yshift=4pt,xshift=-75pt}]\arrow[dddr, opacity=0.4, "\lambda_3^+"{yshift=-10pt,xshift=0pt}] & & \lambda_1 \arrow[dl, "\beta"']\arrow[dr, opacity=0.4, "\lambda_4^-"]\arrow[dlll, bend right=15, opacity=0.4,"\lambda_2^-"'{yshift=4pt,xshift=75pt}]\arrow[dddl, opacity=0.4, "\lambda_3^-"{yshift=5pt}] & \\
 \lambda_2 \arrow[dd, "\gamma"'] & & \lambda_0+\lambda_1 \arrow[ll, dashed, "{(\lambda_2^+,\lambda_2^-)}"']\arrow[rr, dashed, "{(\lambda_4^+,\lambda_4^-)}"]\arrow[dd, dashed, "{(\lambda_3^+,\lambda_3^-)}"'] & & \lambda_4 \arrow[dd, "\psi"] \\
 & & & & \\
\lambda_2?_{\rho_1}\lambda_3 & & \lambda_3 \arrow[ll, "\theta"']\arrow[rr, "\phi"] & & \lambda_3?_{\rho_2}\lambda_4
\end{tikzcd}
\end{center}

Here, $\alpha$ and $\beta$ are the canonical injections into ${\lambda_0+\lambda_1}$, and the cospan ${(\gamma,\theta)}$ is induced by the pushout of ${(\lambda_2^+,\lambda_2^-)}$ and ${(\lambda_3^+,\lambda_3^-)}$ which, in turn, are unique morphisms induced by the universal property of coproducts. Similarly, the cospan ${(\phi,\psi)}$ derives from the pushout of ${(\lambda_3^+,\lambda_3^-)}$ and ${(\lambda_4^+,\lambda_4^-)}$ which are also induced from the universal property of coproducts. 

Now, by the commutativity property of pushouts, we deduce the existence of the morphisms ${f\colon\lambda_0+\lambda_1\rightarrow\lambda_2?_{\rho_1}\lambda_3}$ and ${g\colon\lambda_0+\lambda_1\rightarrow\lambda_3?_{\rho_2}\lambda_4}$ which are given by ${f=\gamma\circ(\lambda_2^+,\lambda_2^-)=\theta\circ(\lambda_3^+,\lambda_3^-)}$ and ${g=\phi\circ(\lambda_3^+,\lambda_3^-)=\psi\circ(\lambda_4^+,\lambda_4^-)}$, respectively. Thus, yielding the following diagrams:\\

\noindent
\begin{minipage}{0.45\textwidth}
\centering
\begin{tikzcd}
 \rho_3:= & [-1.4cm] \lambda_0 \arrow[dr, "\alpha"{yshift=4pt,xshift=-10pt}]\arrow[dd, "f\circ\alpha"']\arrow[ddrr, opacity=0.4, bend left=40, "\lambda_4^+"{yshift=-25pt,xshift=25pt}] & & \lambda_1 \arrow[dl, "\beta"'{yshift=4pt,xshift=10pt}]\arrow[dd, opacity=0.4, "\lambda_4^-"]\arrow[ddll, bend right=40, "f\circ\beta"{yshift=-25pt,xshift=-32pt}] \\
 & & \lambda_0+\lambda_1 \arrow[dl,"f"]\arrow[dr, dashed,"{(\lambda_4^+,\lambda_4^-)}"'] & \\
 & \lambda_2?_{\rho_1}\lambda_3 & & \lambda_4
\end{tikzcd}
\end{minipage}
\hfill
\begin{minipage}{0.6\textwidth}
\centering
\begin{tikzcd}
 \rho_4:= & [-0.95cm] \lambda_0 \arrow[dr, "\alpha"{yshift=4pt,xshift=-10pt}]\arrow[dd, opacity=0.4, "\lambda_2^+"']\arrow[ddrr, bend left=40, "g\circ\alpha"{yshift=-25pt,xshift=25pt}] & & \lambda_1 \arrow[dl, "\beta"'{yshift=4pt,xshift=10pt}]\arrow[dd, "g\circ\beta"]\arrow[ddll, bend right=40, opacity=0.4, "\lambda_2^-"{yshift=-25pt,xshift=-32pt}] \\
 & & \lambda_0+\lambda_1 \arrow[dl, dashed, "{(\lambda_2^+,\lambda_2^-)}"]\arrow[dr, "g"'] & \\
 & \lambda_2 & & \lambda_3?_{\rho_2}\lambda_4
\end{tikzcd}
\end{minipage}
\\

For the diagram $\rho_3$, we use Proposition \ref{prop:adjoints-for-branching} to deduce that ${f\circ\alpha\colon\lambda_0\rightarrow\lambda_2?_{\rho_1}\lambda_3}$ and ${f\circ\beta\colon\lambda_1\rightarrow\lambda_2?_{\rho_1}\lambda_3}$ are the in- and out-adjoints of ${\lambda_2?_{\rho_1}\lambda_3}$, respectively. Simply applying Definition \ref{def:branching}, we determine ${(\lambda_2?_{\rho_1}\lambda_3)?_{\rho_3}\lambda_4}$ is the pushout of $f$ and ${(\lambda_4^+,\lambda_4^-)}$, computed as ${(\lambda_2+_{\lambda_0+\lambda_1}\lambda_3)+_{\lambda_0+\lambda_1}\lambda_4}$. A similar reasoning allow us to deduce ${\lambda_2?_{\rho_4}(\lambda_3?_{\rho_2}\lambda_4)}$ is the pushout of ${(\lambda_2^+,\lambda_2^-)}$ and $g$, computed as ${\lambda_2+_{\lambda_0+\lambda_1}(\lambda_3+_{\lambda_0+\lambda_1}\lambda_4)}$.

As ${(\lambda_j^+,\lambda_j^-)\colon\lambda_0+\lambda_1\rightarrow\lambda_j}$ is a monorphism for all ${j\in\{2,3,4\}}$ (by Definition \ref{def:adjoints}) and pushouts are computed in the category of finite sets, associativity of pushouts holds up to unique isomorphism. That is,  ${(\lambda_2+_{\lambda_0+\lambda_1}\lambda_3)+_{\lambda_0+\lambda_1}\lambda_4}\cong{\lambda_2+_{\lambda_0+\lambda_1}(\lambda_3+_{\lambda_0+\lambda_1}\lambda_4)}$. Or, equivalently, ${(\lambda_2?_{\rho_1}\lambda_3)?_{\rho_3}\lambda_4}\cong{\lambda_2?_{\rho_4}(\lambda_3?_{\rho_2}\lambda_4)}$. Thus, concluding that branching is associative up to unique isomorphism.\qed
\end{proof}

\subsection{Iterative Circuits}

In the theory of control-driven Boolean circuits, there are \emph{head-} and \emph{tail-iterative} circuits. In the former, the decision of continuing an iteration is made just before executing the circuit being iterated over (as in \emph{while} programming constructs), whereas in the latter such a decision is made immediately after executing the circuit (as in \emph{do-while} constructs). In any case, the construction requires the formation of a diagram of six adjoint morphisms from which a colimit is computed. The adjoints sharing domain identify the variables being glued together.

Definition \ref{def:head} describes the colimit operation to form a head-iterative circuit, whereas Definition \ref{def:tail} shows how to construct a tail-iterative one. Both operations can always be computed in the category of circuits by relying on the fact that the pushout of a span of adjoint morphisms can always be computed (cf. \cite{arellanes_compositional_2026}).

\begin{definition}[Head-Iterative Circuit]\label{def:head}
A head-iterative circuit $*_\rho\lambda$ is the colimit of a diagram $\rho$ of adjoint morphisms:
\[
\begin{tikzcd}
\lambda_2 & \lambda_4 & \lambda_0 \arrow[l, "\lambda_4^+"]\arrow[r, "\lambda^+"']\arrow[ll, bend right=12, "\lambda_2^-"']\arrow[rrr, bend left=12, "\lambda_3^-"] & \lambda & \lambda_1 \arrow[l, "\lambda^-"]\arrow[r, "\lambda_3^+"'] & \lambda_3
\end{tikzcd}
\]
where $\lambda$, $\lambda_2$, $\lambda_3$ and $\lambda_4$ are sound circuits, and $\lambda_0$ and $\lambda_1$ are trivial (see Definition \ref{def:adjoints}). The colimit of $\rho$ is computed as ${(\lambda_4+_{\lambda_0}\lambda)+_{(\lambda_0+\lambda_1)}(\lambda_2+_{\lambda_0}\lambda_3)}$.
\end{definition}

\begin{definition}[Tail-Iterative Circuit]\label{def:tail}
A tail-iterative circuit $\lambda*_\rho$ is the colimit of a diagram $\rho$ of adjoint morphisms:
\[
\begin{tikzcd}[row sep=0.9em]
 & \lambda_0 \arrow[dl, "\lambda_2^-"']\arrow[r, "\lambda_3^-"]\arrow[dr, "\lambda^+"'] & \lambda_3 & \lambda_1 \arrow[l, "\lambda_3^+"']\arrow[dr, "\lambda_4^+"]\arrow[dl, "\lambda^-"] \\
 [-0.5cm]
\lambda_2 & & \lambda & & \lambda_4
\end{tikzcd}
\]
where $\lambda$, $\lambda_2$, $\lambda_3$ and $\lambda_4$ are sound circuits, and $\lambda_0$ and $\lambda_1$ are trivial (see Definition \ref{def:adjoints}). The colimit of $\rho$ is computed as ${((\lambda_2+_{\lambda_0}\lambda_3)+_{\lambda_3}(\lambda_3+_{\lambda_1}\lambda_4))+_{(\lambda_0+\lambda_1)}\lambda}$.
\end{definition}

In both cases, $\lambda_2$ and $\lambda_4$ respectively serve as entry and exit points for the iterative structure, whereas $\lambda$ and $\lambda_3$ correspond to the circuit being iterated over and the circuit marking the end of an iteration, respectively. Introducing additional circuits other than $\lambda$ is necessary since these entities always need interface variables to interact with others, given the restrictions imposed by Definition \ref{def:morphism}. Definitions \ref{def:head} and \ref{def:tail} just differ in how adjoint morphisms share domains so as to structurally direct the position of the point that decides loop termination.

\subsection{Circuit Dynamics}
\label{sec:dynamics}

A circuit defines an explicit control flow structure for the activation of computation units in some order, no matter whether it is sound or not. Given that a unit is a family of operators, activating it means reducing it by choosing an operator for evaluation at the values from the variables connected to that unit. Thus, variables serve as placeholders for information, ranging from control signals to Boolean values. Boolean values are optional but control signals are ever present because control variables always exist in any circuit interface (see Definition \ref{def:computon}). 

The state of a circuit is a mapping from variables to concrete values at some moment in discrete time, which is initial if it assigns a value to each invar and no value to all the other variables. A circuit reaches its final state when each outvar has a value, while all the other variables have no value at all. Value absence is expressed through the use of partial functions, as described in Definition \ref{def:computon-state}.

\begin{definition}[Circuit State] \label{def:computon-state}
The state of a circuit $\lambda$ at time $j$ is a partial function ${\delta^j\colon V\rightarrow\{\ast,0,1\}}$ where for all ${v \in Dom(\delta^j)}$: 
\begin{itemize}
\item[] ${c(v)=\mathbbm{1}\iff\delta^j(v)\in\{\ast\}}$ and
\item[] ${c(v)=\mathbb{B}\iff\delta^j(v)\in\{0,1\}}$
\end{itemize}
A state ${\delta^j}$ is initial if ${Dom(\delta^j)=V^+}$ or final if ${Dom(\delta^j)=V^-}$.
\end{definition}

\begin{remark}
For convention and since we deal with discrete time, the initial state of every circuit occurs at time $0$. We use $*$ to express a control signal.
\end{remark}

At every step of a circuit execution, a number of computation units can be enabled. A unit is enabled at time $j$ when all the variables connected to it have values assigned by the current state at $j$ (see Definition \ref{def:computation-unit-status}). Computation units not satisfying this condition are called idle.

\begin{definition}[Unit Status] \label{def:computation-unit-status}
Given a circuit $\lambda$ and a state ${\delta^j}$, a computation unit ${u\in U}$ is enabled under ${\delta^j}$ if ${\bullet u\subseteq Dom(\delta^j)}$; otherwise, $u$ is idle under ${\delta^j}$. 
\end{definition}

As control-driven Boolean circuits can behave non-deterministically (e.g., branching circuits), it is necessary to decide which computation units can be reduced at every time step. Definition \ref{def:computation-unit-ready-reduction} says that this is done by partitioning the set of enabled computation units under $\delta^j$, by taking into account matching invar interfaces. And then invoking the axiom of choice to non-deterministically choose one representative from each equivalence class, in order to form a set $R^j$. Evidently, when a partition contains only one unit $u$, the choice function will select $u$, i.e., $u$ will be ready to be reduced.\footnote{A random choice function defines arbitrary selection rather than probabilistic randomness. In practice, such a choice could be implemented via stochastic methods.}

\begin{definition}[Ready Units] \label{def:computation-unit-ready-reduction}
Let ${E^j}$ be the finite set of computation units enabled under ${\delta^j}$ and $\sim$ the equivalence relation ${\{(u_1,u_2)\in E^j\times E^j\mid\bullet u_1=\bullet u_2\}}$. If $A$ is the partition induced by $\sim$ and $f$ is the random choice function on $A$, the set ${R^j}$ of computation units ready to be reduced under ${\delta^j}$ is ${\{f(E)\mid E\in A\}}$. 
\end{definition}

Strictly speaking, computation units in $R^j$ are reduced at ${j+1}$ through the mapping given by $\delta^j$; thereby, yielding new values in the variables connected from such units (see Definition \ref{def:expression-evaluation}).\footnote{To understand Definition \ref{def:expression-evaluation}, recall that a computation unit is a function that establishes a family of NAND operators indexed by $\mathbb{N}$ (see Definition \ref{def:cunit}).} These new values can subsequently be consumed by units attached from such variables. The overall execution of a circuit is given by a state transition function whose behaviour is captured in Definition \ref{def:transition}.

\begin{definition}[Unit Reduction] \label{def:expression-evaluation}
If $\lambda$ is a circuit with ${u\in R^j}$ and $B$ is the set ${\{v\in\bullet u\mid c(v)=\mathbb{B}\}}$, the result ${\llbracket u \rrbracket^j}$ of reducing $u$ under ${\delta^j}$ is given by:

\begin{align*}
  \llbracket u \rrbracket^j &= u(|B|)(\delta^j(b_1),\ldots,\delta^j(b_{|B|}))=\uparrow_{|B|}(\delta^j(b_1),\ldots,\delta^j(b_{|B|})) & \\
    &=\begin{cases}
       1 & |B|=0\\
       \lnot(\delta^j(b_1)\land\cdots\land \delta^j(b_{|B|})) & \text{otherwise}
    \end{cases} &\!\!\!\!\!\!\!\!\! \text{ where }{\bigcup_{k=1}^{|B|} \{b_k\}=B}
\end{align*}
\end{definition}

\begin{remark}\label{rem:nand-behaviour}
For evaluating a particular operator from a family, we use prefix notation with parentheses included. In the case of ${|B|=0}$, the arguments are ignored because the constant function ${\uparrow_0}$ is chosen. When ${|B|=1}$, ${\uparrow_1}$ evidently behaves as the NOT function. When ${|B|>1}$, the ${|B|}$-ary NAND operator is chosen in which case the order of arguments is irrelevant, given that NAND is based on AND (i.e., an operator that satisfies commutativity). 
\end{remark}

\begin{definition}[State Transition]\label{def:transition}
Given the state ${\delta^j}$ of a circuit $\lambda$ for ${j\geq 0}$, ${\delta^{j+1}}$ is computed as follows for each ${v\in V}$:
\[\delta^{j+1}(v) = 
    \begin{cases}
       \ast & (\exists u \in R^j)[v\in u\bullet\text{ and }c(v)=\mathbbm{1}]\\
       \llbracket u \rrbracket^j & (\exists u \in R^j)[v\in u\bullet\text{ and }c(v)=\mathbb{B}]\\
       \delta^j(v) & v\in Dom(\delta^j)\text{ and }(\nexists u \in R^j)[v\in\bullet u\cup u\bullet]\\ 
    \end{cases}
\]
\end{definition}

\begin{remark}\label{rem:domain-state}
Definitions \ref{def:computon-state} and \ref{def:transition} imply ${V^+}$ is the domain of the initial state ${\delta^0}$ of a circuit $\lambda$ and that ${Dom(\delta^{j+1})}$ is $\{v\in V\mid (\exists u \in R^{j})[v\in u\bullet]\}\cup\{v\in Dom(\delta^{j})\mid (\nexists u \in R^{j})[v\in\bullet u\cup u\bullet]\}$ for ${j\geq 0}$. This means that there is a halting recursive procedure to determine the domain of an arbitrary state, i.e., it is possible to establish which variables have values at every step of a circuit execution.
\end{remark}

Based on Definition \ref{def:transition} and considering that control variables are ever present in any circuit $\lambda$, there are four possible execution patterns for any unit $u\in U$:
\begin{itemize}
    \item If ${c(\bullet u)=\{\mathbbm{1}\}=c(u\bullet)}$, all variables in $u\bullet$ take the value $*$ because there are no Boolean variables. This behaviour can be suitable to replicate (or merge) control signals onto (or from) multiple circuits.
    \item If ${c(\bullet u)=\{\mathbbm{1},\mathbb{B}\}}$ and ${c(u\bullet)=\{\mathbbm{1}\}}$, $u$ produces control signals only. So, $u$ can be perceived as a computation unit that discards (or ``eats'') Booleans.
    \item If ${c(\bullet u)=\{\mathbbm{1}\}}$ and ${c(u\bullet)=\{\mathbbm{1},\mathbb{B}\}}$, the operator $\uparrow_0$ is triggered for $u$ to produce control signals and the constant $1$ in all the Boolean variables in ${u\bullet}$. 
    \item If ${c(\bullet u)=\{\mathbbm{1},\mathbb{B}\}=c(u\bullet)}$, a classical NAND gate is simulated. By Remark \ref{rem:nand-behaviour}, when there is exactly one Boolean variable in ${\bullet u}$, all the Boolean variables in ${u\bullet}$ take the negation of the only input since $\uparrow_1$ is chosen. Otherwise, all the Boolean variables in ${u\bullet}$ take the value resulting from the action of the corresponding $k$-ary NAND operator (${k>1}$). Evidently, all the control variables in ${u\bullet}$ take the value $*$.
\end{itemize}

The above cases are just a description of the possible values a computation unit can produce during a state transition, depending on the unit's structure. Such transitions describe how to move from an initial state to a final one, as described in Definition \ref{def:termination}. Below this definition we describe a simple example.

\begin{definition}[Circuit Termination]\label{def:termination}
A circuit $\lambda$ terminates if and only if there is a finite orbit of states from its initial state to its final state.
\end{definition}

\begin{example}
Let ${\delta^0=\{(v_1,*),(v_2,1),(v_3,0)\}}$ be the initial state of the AND circuit from Figure \ref{fig:example-basic}(c). As ${v_1,v_2,v_3\in\bullet\uparrow_2}$, such a unit is enabled so ${E^0=\{\uparrow_2\}}$. Having only one element in $E^0$ means that the partition induced by the equivalence relation from Definition \ref{def:computation-unit-ready-reduction} must be ${\{\{\uparrow_2\}\}}$ so ${R^0=\{\uparrow_2\}}$. Using Definition \ref{def:transition}, we obtain 
\[\delta^1=\{(v_4,*),(v_5,\lnot(\delta^0(v_2)\land\delta^0(v_3)))\}=\{(v_4,*),(v_5,1)\}\] 
which enables and triggers the reduction of $\uparrow_1$ to yield
\[\delta^2=\{(v_6,*),(v_7,\lnot\delta^1(v_5))\}=\{(v_6,*),(v_7,0)\}\] 
As ${Dom(\delta^2)=\{v_6,v_7\}=V^-}$, ${\delta^2}$ is the final state of the composite shown in Figure \ref{fig:example-basic}(c). That is, our composite encapsulates sequential control flow for the computation of ${\lnot(\lnot(\delta^0(v_2)\land\delta^0(v_3)))}$ in two time steps.
\end{example}

\section{Relationship Between Control-Driven Boolean Circuits and Classical Boolean Circuits}
\label{sec:equivalence}

Boolean circuits are a model of computation for combining logic gates into distinct configurations in order to perform some Boolean function. In this model, a gate is the fundamental unit of computation which implements a concrete Boolean function from a pre-defined basis set \cite{clote_boolean_2002}. Here, we consider $\{\uparrow\}$ since it is well-known that such a set forms a functionally complete basis from which all the other gates can be constructed \cite{wegener_complexity_1987}. Definition \ref{def:classical-circuit} formalises the notion of Boolean circuits over this basis, which we simply refer to as \emph{NAND circuits}. 

\begin{definition}[NAND Circuit]\label{def:classical-circuit}
A NAND circuit $C$ is a directed acyclic graph ${(N,E)}$ where $N$ is a non-empty set of vertices such that each ${n\in N}$ satisfies ${n^+\neq 0\lor n^-\neq 0}$ and corresponds to an input variable (if ${n^+=0}$), an output variable (if ${n^-=0}$) or a NAND gate (if ${n^+=2}$ and ${n^-=1}$). The set $E$ of edges is non-empty and each ${e\in E}$ denotes flow of information from an input variable to a gate, from a gate to another gate or from a gate to an output variable. 
\end{definition}

\begin{remark}\label{rem:circuits-always-gates}
The condition ${n^+\neq 0\lor n^-\neq 0}$ means there are no isolated variables, which is a valid assumption as they can be expressed as buffers built out of NAND gates. Due to the way we define edges, gates are necessarily present in $N$. 
\end{remark}

\begin{remark}
Having fan-in-$2$ fan-out-$1$ gates does not limit computation power (i.e., what can be computed), since $k$-ary gates can be built out of them. Although this structural feature increases circuit size, it simplifies the transformation from NAND to control-driven circuits. Our transformation procedure is described in the proof of Theorem \ref{th:classical-to-control} which, by Proposition \ref{prop:sound-transformation}, always yields a sound circuit.
\end{remark}

\begin{theorem}\label{th:classical-to-control}
Every NAND circuit is a control-driven Boolean circuit.    
\end{theorem}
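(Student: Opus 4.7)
The plan is to prove the theorem by an explicit construction: given an arbitrary NAND circuit $C=(V,E)$, I would exhibit a decuple satisfying Definition \ref{def:computon} whose structure mirrors $C$. The central idea is that each NAND gate becomes a computation unit (recall from Definition \ref{def:cunit} that every computation unit is in fact the same indexed NAND family, with the operator selected at reduction time by the number of Boolean inputs); each input variable of $C$ becomes a Boolean invar; each output variable becomes a Boolean outvar; and each gate-to-gate edge is realised by interposing a fresh intermediate Boolean variable together with a matching output flow and input flow. Since Definition \ref{def:computon} mandates the existence of at least one type-$\mathbbm{1}$ variable with no outgoing flow and at least one with no incoming flow, I would adjoin a single fresh control inoutvar $\star$ that plays both roles simultaneously.

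Concretely, I would partition $V$ into $V_{\text{in}}$, $V_{\text{out}}$ and $V_{\text{gate}}$ according to in- and out-degree, and partition $E$ accordingly into $E_{\text{in}}$ (input-to-gate), $E_{\text{mid}}$ (gate-to-gate) and $E_{\text{out}}$ (gate-to-output). I would then set ${U:=V_{\text{gate}}}$, ${P:=V_{\text{in}}\sqcup V_{\text{out}}\sqcup\{b_e\mid e\in E_{\text{mid}}\}\sqcup\{\star\}}$, ${\Sigma:=\{\mathbbm{1},\mathbb{B}\}}$, ${I:=E_{\text{in}}\sqcup E_{\text{mid}}}$ and ${O:=E_{\text{mid}}\sqcup E_{\text{out}}}$. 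The functions $s$, $\tau$, $\sigma$, $t$ are read off the edges of $C$ in the obvious way (in particular, for a gate-to-gate edge $e$, both its $s$-image and its $t$-image are the fresh intermediate variable $b_e$), and $c$ assigns $\mathbb{B}$ to every variable derived from $V_{\text{in}}\cup V_{\text{out}}$ or $E_{\text{mid}}$, and $\mathbbm{1}$ to $\star$.

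Verification that this decuple satisfies Definition \ref{def:computon} is almost entirely bookkeeping: $\tau$ and $\sigma$ are surjective because every gate has in-degree $2$ and out-degree $1$, so each unit has at least one incoming and at least one outgoing flow; the two existence clauses are witnessed simultaneously by $\star$, which lies in both ${P\setminus s(I)}$ and ${P\setminus t(O)}$; and $c$ is surjective onto $\{\mathbbm{1},\mathbb{B}\}$ because, by Remark \ref{rem:circuits-always-gates}, $E$ is non-empty, which in a DAG forces the presence of at least one source and one sink vertex, ensuring that at least one Boolean variable exists in $P$. I do not anticipate any genuinely hard step: the only point that deserves mild care is the case split defining $s$, $t$, $\sigma$, $\tau$ across the three kinds of edges, together with the routine observation that input variables of $C$ may have arbitrary out-degree so $s$ need not be injective (which is fine, as Definition \ref{def:computon} imposes no injectivity requirement on $s$). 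In essence, the theorem reduces to the statement that NAND circuits embed as a structurally restricted subclass of control-driven Boolean circuits.
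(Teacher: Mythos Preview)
Your construction is correct and satisfies Definition \ref{def:computon}, but it is genuinely different from the paper's. The paper takes every edge of $C$ and splits it into \emph{two} variables---one typed $\mathbbm{1}$ and one typed $\mathbb{B}$---so that $P\cong E\sqcup E$, $I\cong E_2\sqcup E_2$ and $O\cong E_1\sqcup E_1$; the resulting circuit therefore carries a parallel control-flow skeleton that shadows the data flow edge for edge. By contrast, you keep the Boolean structure essentially verbatim (vertices of $C$ as Boolean variables, gate-to-gate edges as fresh intermediate Boolean variables) and discharge the two $\mathbbm{1}$-existence clauses with a single disconnected control inoutvar $\star$.

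What this buys: your encoding is leaner and arguably more transparent, and it even allows shared Boolean invars (an input vertex of high out-degree stays a single variable), which the paper explicitly forbids itself (Remark \ref{rem:equivalence-ins-outs}). On the other hand, the paper's construction yields a \emph{sound} circuit in the sense of Definition \ref{def:sound}, whereas yours does not---$\star$ is an inoutvar with no path to any outvar---and more importantly the paper's circuit is ``control-driven'' in a substantive way: every computation unit is gated by control invars, so the operational semantics of Section \ref{sec:dynamics} genuinely tracks control signals through the structure. In your circuit the control component is vestigial, and the downstream use in Theorem \ref{th:computons-any-boolean} (where $\lambda_n$ must compute $f$ under the state-transition semantics) still goes through, but only because units with purely Boolean pre-sets are enabled as soon as their Boolean inputs arrive. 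Both routes prove the theorem as stated; the paper's pays a size penalty (acknowledged in Section \ref{sec:conclusions}) to obtain a circuit that is structurally faithful to the control-driven paradigm.
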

\begin{proof}
Given a NAND circuit $C$, let:
\begin{itemize}[noitemsep, topsep=4pt]
\item ${E_1:=E\setminus\{(m,n)\in E\mid m^+=0\land n^+=2\land n^-=1\}}$, 
\item ${E_2:=E\setminus\{(m,n)\in E\mid m^+=2\land m^-=1\land n^-=0\}}$,
\item ${G:=\{n\in N\mid n^+=2\land n^-=1\}}$,
\item ${s'\colon E_1\sqcup E_1\rightarrow G}$ be a function given by ${s'((m,n),j)=m}$ where ${(m,n)\in E_1}$ and ${j\in\{1,2\}}$, and ${t'\colon E_2\sqcup E_2\rightarrow G}$ be a function given by ${t'((x,y),k)=y}$ where ${(x,y)\in E_2}$ and ${k\in\{1,2\}}$.
\end{itemize}
First, notice that $E_1$ discards all the edges that run from an input variable to a gate; so, by Definition \ref{def:classical-circuit}, each ${(m,n)\in E_1}$ is an edge from a gate to an output variable or an edge between gates. Similarly, $E_2$ discards all the edges running from a gate to an output variable; so, again by Definition \ref{def:classical-circuit}, each ${(m,n)\in E_2}$ is an edge where $n$ is a gate and $m$ is an input variable or another gate. The set $G$ simply contains all the gates from $C$. The functions ${s'}$ and ${t'}$ are well-defined because gates are always the source and target of ${E_1}$- and ${E_2}$-edges, respectively. For convention, we use positive integers to tag elements of disjoint sets. With these considerations, we construct a control-driven circuit $\lambda$ by: ${V\cong E\sqcup E}$, ${U\cong G}$, ${O\cong E_1\sqcup E_1}$ and ${I\cong E_2\sqcup E_2}$. If ${\phi\colon V\rightarrow E\sqcup E}$, ${\psi\colon U\rightarrow G}$, ${\theta\colon O\rightarrow E_1\sqcup E_1}$ and ${\gamma\colon I\rightarrow E_2\sqcup E_2}$ are the corresponding isomorphisms, and ${\iota_j\colon E_j\sqcup E_j\rightarrow E\sqcup E}$ is the obvious inclusion function for ${j\in\{1,2\}}$, then: 
\begin{itemize}[noitemsep, topsep=4pt]
\item ${s\colon I\rightarrow V}$ is given by ${s(i)=(\phi^{-1}\circ\iota_2\circ\gamma)(i)}$,
\item ${\tau\colon I\rightarrow U}$ is given by ${\tau(i)=(\psi^{-1}\circ t'\circ\gamma)(i)}$,
\item ${t\colon O\rightarrow V}$ is given by ${t(o)=(\phi^{-1}\circ\iota_1\circ\theta)(o)}$,
\item ${\sigma\colon O\rightarrow U}$ is given by ${\sigma(o)=(\psi^{-1}\circ s'\circ\theta)(o)}$, and
\item ${c\colon V\rightarrow\Sigma}$ is given by $c(v) = 
    \begin{cases}
       \mathbbm{1} & v=\phi^{-1}(e,1)\text{ for some } (e,1)\in E\sqcup E\\
       \mathbbm{\mathbb{B}} & v=\phi^{-1}(e,2)\text{ for some } (e,2)\in E\sqcup E\\
    \end{cases}$
\end{itemize}
Strictly, the inverses of the isomorphisms produce a singleton set because they are injective. For the sake of argument, we consider inverses to yield single elements. 

Now, to show $\lambda$ fully adheres to Definition \ref{def:computon}, we need to prove that $c$, $\sigma$ and $\tau$ are surjective. For this, observe there is an inherent partition of $V$ induced by the equivalence relation ${v\sim w\iff \phi(v)=(e_1,j)\land \phi(w)=(e_2,j)}$ for all ${e_1,e_2\in E}$ and ${j\in\{1,2\}}$. So, by the def. of $c$, variables in the part ${j=1}$ are mapped to the unit type $\mathbbm{1}$ and variables in the part ${j=2}$ are sent to the Boolean type $\mathbb{B}$. As the partition evidently consists of two disjoint sets, it follows that $c$ is surjective.

To prove $\sigma$ is also surjective, we first show ${s'}$ also is by assuming for contradiction there is some ${g\in G}$ for which there is no ${(e,k)\in E_1\sqcup E_1}$ with ${s'(e,k)=g}$. As $g$ must necessarily have an outgoing edge because ${g^-=1}$, ${(g, m) \in E\setminus E_1}$ for some ${m\in N}$, i.e., ${g^+=0}$ as per the def. of ${E_1}$. Clearly, a contradiction to the property ${g^+=2}$ imposed by the definition of $G$. With this in mind, we now show ${\sigma=\psi^{-1}\circ s'\circ\theta}$ is onto by choosing an arbitrary unit ${u \in U}$. As ${\psi^{-1}}$ is surjective because it is an isomorphism, there must be some gate ${h \in G}$ with ${\psi^{-1}(h)=u}$. With the surjectivity of ${s'}$, we deduce the existence of an edge ${((h, n), l)\in E_1\sqcup E_1}$ where ${s'((h, n), l)=h}$ and ${l \in \{1,2\}}$. Given that $\theta$ is a isomorphism, ${\theta(o)=((h, n), l)}$ for some ${o \in O}$; thereby, proving $\sigma$ is onto. 

To prove ${\sigma\restriction_{(c\circ t)^{-1}(\mathbbm{1})}}$ is also onto, we use the definition of ${s'}$ to derive ${s'((h, n), 1)=h=s'((h, n), 2)}$ and simply choose ${((h, n), 1)\in E_1\sqcup E_1}$. Observing that ${c\circ t=c\circ\phi^{-1}\circ\iota_1\circ\theta}$, that ${\iota_1((h, n), 1)=((h, n), 1)}$ and that ${\phi^{-1}}$ is total (because $\phi$ is an isomorphism), we obtain ${\phi^{-1}(\iota_1(\theta(o)))=\phi^{-1}((h, n), 1)}$. By the definition of $c$, we derive ${c(\phi^{-1}((h, n), 1))=\mathbbm{1}}$, i.e., ${o \in (c\circ t)^{-1}(\mathbbm{1})}$. As ${\sigma(o)=(\psi^{-1}\circ s'\circ\theta)(o)=u}$ and ${o \in (c\circ t)^{-1}(\mathbbm{1})}$, we conclude ${\sigma\restriction_{(c\circ t)^{-1}(\mathbbm{1})}}$ is surjective. The surjectivities of $\tau$ and ${\tau\restriction_{(c\circ s)^{-1}(\mathbbm{1})}}$ follow analogously.

We just need to verify that there is some ${v\in V\setminus t(O)}$ with ${c(v)=\mathbbm{1}}$ and some ${w\in V\setminus s(I)}$ with ${c(w)=\mathbbm{1}}$. We show the former since the other is symmetric. For this, note that gates always exist as per Remark \ref{rem:circuits-always-gates} and that they always have fan-in $2$ as per Definition \ref{def:classical-circuit}. So, there is some edge ${(x,y)\in E}$ from an input variable ${x\in N}$ to a gate ${y\in N}$. Consequently, ${E\sqcup E}$ must contain both ${((x,y),1)}$ and ${((x,y),2)}$ so ${\phi^{-1}((x,y),1)\in V}$ because $\phi$ is an isomorphism. Assuming for contradiction that there is some ${o\in O}$ with ${t(o)=\phi^{-1}(\iota_1(\theta(o)))=\phi^{-1}((x,y),1)}$, we deduce ${((x, y), 1)\in E_1\sqcup E_1}$ given that ${\iota_1}$ is an inclusion, i.e., ${(x,y)\in E_1}$. By the definition of ${E_1}$, we have ${x^+\neq 0}$ which clearly contradicts that $x$ is an input variable with ${x^+=0}$. Hence, ${\phi^{-1}((x,y),1)\notin t(O)}$. So, by the definition of $c$, we conclude ${c(\phi^{-1}((x,y),1))=\mathbbm{1}}$, as required.\qed
\end{proof}

\begin{proposition}\label{prop:sound-transformation}
The control-driven circuit $\lambda$ of a NAND circuit $C$ is sound. 
\end{proposition}
\begin{proof}
If ${v\in V^+\cup s(I)}$, then ${\phi(v):=((m,n),j)\in E\sqcup E}$ exists for ${j\in \{1,2\}}$ and ${m,n\in N}$ so that ${(m,n)\in E}$. As it is easy to check that there is a path from every $C$-node to some output variable in $C$, the path ${\langle (m,n),\ldots,(x,y) \rangle}$ must exist for ${x,y\in N}$ and ${y^-=0}$. Supposing for contradiction ${\phi^{-1}((x,y),k)\notin V^-}$ for ${k\in \{1,2\}}$, there must be some ${i\in I}$ with ${s(i)=\phi^{-1}(\iota_2(\gamma(i)))=\phi^{-1}((x,y),k)}$. As ${\phi^{-1}}$ is injective because it is an isomorphism, ${\iota_2(\gamma(i))=((x,y),k)}$ so that ${(x,y)\in E_2}$ because $\iota_2$ is an inclusion. By the definition of $E_2$, this means (i) ${x^+\neq 2}$ or (ii) ${x^-\neq 1}$ or (iii) ${y^-\neq 0}$. If (i) or (ii) holds, then $x$ cannot be a gate so it has to be an input or an output variable. But this contradicts Definition \ref{def:classical-circuit} which says that there are no edges between outputs or from inputs to outputs. As (iii) immediately contradicts the fact that $y$ is an output, we conclude ${\phi^{-1}((x,y),k)\in V^-}$, as required.\qed
\end{proof}

\begin{remark}
The construction presented in the proof of Theorem \ref{th:classical-to-control} yields a control-driven Boolean circuit $\lambda$ in which there is a control variable for each edge of the corresponding NAND circuit $C$. This structural feature does not influence behaviour correspondence between $\lambda$ and $C$, since control values are always consumed and produced by computation units. 
\end{remark}

\begin{remark}\label{rem:equivalence-ins-outs}
Note that there are no shared input/output variables in $\lambda$ (similar to how linear logic treats resources). Particularly, if ${P:=\{n\in N\mid n^+=0\}}$ is the set of input variables of $C$, $\lambda$ has ${\sum_{n\in P}n^-}$ control invars and ${\sum_{n\in P}n^-}$ Boolean invars. Symmetrically, if ${Q:=\{n\in N\mid n^-=0\}}$ is the set of output variables of $C$, $\lambda$ has ${\sum_{n\in Q}n^+}$ control outvars and ${\sum_{n\in Q}n^+}$ Boolean outvars.
\end{remark}

Theorem \ref{th:classical-to-control} indicates that control-driven Boolean circuits are at least as powerful as their classical counterpart in terms of what can be computed. But they are more expressive because they are equipped with well-defined operators for the explicit definition of branching or iterative computations. These control flow constructs are not supported by classical Boolean circuits. The classical counterpart only supports (partial/total) sequencing and parallelising. 

Although they are more expressive, control-driven circuits can only compute Boolean functions on a fixed number of inputs. To address this, such a model can extend to families that non-uniformly perform computation, just as their classical counterpart. Such an extension is formalised in the proof of Theorem \ref{th:computons-any-boolean}. 

\begin{theorem}\label{th:computons-any-boolean}
For any Boolean function $\{0,1\}^*\rightarrow\{0,1\}$ there is a family of control-driven Boolean circuits able to compute it.
\end{theorem}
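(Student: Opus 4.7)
The plan is to reduce to the well-known classical fact that for every Boolean function $f\colon\{0,1\}^*\rightarrow\{0,1\}$ there is a family $\{C_n\}_{n\in\mathbb{N}}$ of NAND circuits with $C_n$ computing the restriction $f_n:=f|_{\{0,1\}^n}$, and then lift this family to control-driven circuits using Theorem \ref{th:classical-to-control}. Functional completeness of $\{\uparrow\}$ (see \cite{wegener_complexity_1987}) guarantees the existence of $C_n$ for each $n$: write $f_n$ in disjunctive normal form and replace $\land$, $\lor$ and $\lnot$ by their standard NAND encodings, which can be done with fan-in $2$ and fan-out $1$ gates by duplicating signals as needed through additional NAND fan-out subcircuits.

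Concretely, the proof would proceed in four steps. First, fix an arbitrary $f\colon\{0,1\}^*\rightarrow\{0,1\}$ and, for each $n\in\mathbb{N}$, exhibit a NAND circuit $C_n$ over the basis $\{\uparrow\}$ computing $f_n$, as just described. Second, apply Theorem \ref{th:classical-to-control} to obtain, for each $n$, a control-driven Boolean circuit $\lambda_n$ whose structure encodes $C_n$, with the invar/outvar counts described in Remark \ref{rem:equivalence-ins-outs}. Third, show that $\lambda_n$ indeed realises $f_n$ operationally: feed the $n$ Boolean invars of $\lambda_n$ with $(x_1,\ldots,x_n)\in\{0,1\}^n$ and $\ast$ on every control invar to obtain an initial state $\delta^0$ (Definition \ref{def:computon-state}); then argue by induction on the topological depth of $C_n$ that after finitely many transitions (Definition \ref{def:transition}) every Boolean outvar of $\lambda_n$ carries the value $f_n(x_1,\ldots,x_n)$. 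The inductive step uses Definition \ref{def:expression-evaluation}: whenever a gate $g$ of $C_n$ has both inputs ready, the corresponding computation unit $u\in U$ in $\lambda_n$ becomes enabled (Definition \ref{def:computation-unit-status}), lies alone in its equivalence class of Definition \ref{def:computation-unit-ready-reduction}, and reduces to $\uparrow_2$ applied to the two incoming Boolean values, matching the gate's output. Fourth, collect $\{\lambda_n\}_{n\in\mathbb{N}}$ as the desired family.

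The main obstacle I anticipate is the third step, namely bridging the purely structural correspondence delivered by Theorem \ref{th:classical-to-control} with the operational semantics of Section \ref{sec:dynamics}. Two issues need care: (i) the control layer. Since every edge of $C_n$ becomes both a control port and a Boolean port in $\lambda_n$, one must check that the control signals propagate in lockstep with the Boolean signals so that a computation unit only fires when the corresponding gate's inputs are available, and that this propagation eventually reaches every outvar. This follows because $c(\bullet u)=\{\mathbbm{1},\mathbb{B}\}=c(u\bullet)$ at every unit built from a gate, placing us in the fourth execution pattern listed after Definition \ref{def:transition}, so control and Boolean values are produced simultaneously. (ii) The duplication of inputs: since control-driven circuits treat variables linearly (Remark \ref{rem:equivalence-ins-outs}), the same input bit may correspond to several invars of $\lambda_n$; this is harmless because the family is free to replicate the external input bit across its invars when interpreting $f_n$ on $(x_1,\ldots,x_n)$.

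Finally, termination (Definition \ref{def:termination}) for each $\lambda_n$ follows from acyclicity of $C_n$: a straightforward induction on the longest path length in $C_n$ bounds the number of transitions needed to reach the final state, ensuring the orbit from $\delta^0$ to a final state is finite. Putting all this together yields a family $\{\lambda_n\}_{n\in\mathbb{N}}$ of control-driven Boolean circuits that non-uniformly computes $f$, completing the proof.
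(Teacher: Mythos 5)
Your proposal is correct and follows essentially the same route as the paper: obtain a classical family $\{C_n\}$ of NAND circuits for $f$ and lift each $C_n$ to a control-driven circuit via Theorem \ref{th:classical-to-control}. In fact, your third step (verifying operationally, via Definitions \ref{def:computation-unit-status}--\ref{def:transition} and induction on gate depth, that $\lambda_n$ really evaluates to $f_n$) supplies a behavioural-correspondence argument that the paper only asserts in a remark, so your write-up is, if anything, more complete than the published proof.
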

\begin{proof}
We know a family $\{C_k\}_{k\in\mathbb{N}}$ of NAND circuits can be defined to compute a Boolean function ${f\colon\{0,1\}^*\rightarrow\{0,1\}}$ where, for all ${k\in\mathbb{N}}$ and all ${x\in\{0,1\}^k}$, $C_k$ computes ${f(x)}$. By Theorem \ref{th:classical-to-control}, we know a circuit $C_k$ has a corresponding control-driven Boolean circuit $\lambda$. Therefore, it is possible to construct a family ${\{\lambda_k\}_{k\in\mathbb{N}}}$ of control-driven circuits where each $\lambda_k$ computes ${f(x)}$ for all ${x\in\{0,1\}^k}$.\qed 
\end{proof}

\begin{remark}
The only aspect to consider when computing a control-driven circuit $\lambda_k$ from a family $\{\lambda_k\}_{k\in\mathbb{N}}$ is that multiple copies of the same input value may be required because $\lambda_k$ does not admit shared invars. Copies must be precisely specified via the initial state $\delta^0$ of $\lambda_k$.
\end{remark}

\section{Related Work}
\label{sec:related-work}

In \cite{arellanes_compositional_2026}, we laid the theoretical foundations of the computon model, which form the basis for the present work. In that paper, we neither formalise behaviour for concrete computation units nor discuss any association with particular data types nor present definite operational semantics. In this paper, we move from the abstract to the particular by applying the computon model in the context of Boolean circuits. We realise this shift by treating units as families of NAND functions and by providing concrete execution semantics (see Definition \ref{def:cunit} and Section \ref{sec:dynamics}). We also provide a new semantics for sequencing so that operand soundness in the sense of Definition \ref{def:sound} is no longer required (see Definition \ref{def:sequential-computon}). Although the identity law is now preserved through this modification (cf. Theorem \ref{th:sequencing-identity}), Theorem \ref{th:sequencing-associativity} has to be proven again via Proposition \ref{prop:sequencing-mono} to guarantee that associativity still holds. For branching, unlike \cite{arellanes_compositional_2026}, we offer explicit proofs for Theorems \ref{th:branching-commutativity} and \ref{th:branching-associativity} via Proposition \ref{prop:adjoints-for-branching}. Although Definition \ref{def:branching} is equivalent to the original notion of branching, it now states the specific colimit operation to form such composites, derived from the existential proof presented in \cite{arellanes_compositional_2026}. Specific colimit operations for iterative composites are also given in Definitions \ref{def:head} and \ref{def:tail}. More importantly, in Section \ref{sec:equivalence}, we prove that our particular model instance is at least as powerful as families of Boolean circuits.
 
It is well-known that such a class of circuits only support sequencing and parallelising in their canonical form \cite{wegener_complexity_1987}.\footnote{Although Boolean circuits can be truly parallel, control flow is implicit so it is impossible to see which parts are sequential/parallel and how they are glued together.} To realise the simplest embodiment of branching, multiplexers (or demultiplexers) can be integrated for deterministically selecting inputs (or outputs) based on a so-called \emph{select line} \cite{lozhkin_complexity_2021,asadi_multiplexer_2024}. More advanced ad-hoc constructs can be introduced (e.g., noisy gates \cite{pippenger_networks_1985,mozeika_reliable_2015}, random flip-flops \cite{stipcevic_quantum_2016} or stochastic multiplexers \cite{brown_stochastic_2001,zhu_stochastic_2014}) to assign probability distributions over outputs with the aim of simulating probabilistic Turing machines (or random Boolean functions). Probabilistic logic has also proven suitable to deal with non-determinism in Boolean circuits \cite{mansinghka_stochastic_2008,chakrapani_probabilistic_2008,flaminio_conditional_2023}. 

Apart from the non-deterministic selection of computational paths, looping is another limitation of such a classical model. To address this, cyclic Boolean circuits \cite{riedel_cyclic_2012} have been proposed where gate outputs feed back as inputs to earlier gates. In circuits that incorporate memory elements (e.g., D-type flip-flops or latches), an output depends on current and previous inputs. In this case, the memory mechanism is realised through feedbacks that store and update internal states \cite{steinbach_sequential_2022,nishimura_classification_2023}. More recent work proposes to analyse cyclic circuits via categorical semantics by presenting them as morphisms in a freely generated symmetric traced monoidal category \cite{katis_feedback_2002,ghica_categorical_2016,ghica_diagrammatic_2017,di_lavore_canonical_2021,sprunger_differentiable_2025}. Syntactically, these categories are expressed as string diagrams \cite{piedeleu_introduction_2025} where a trace represents
a loop from morphism outputs to morphism inputs, leading to an intuitive graphical interpretation of an equational theory for circuits \cite{ghica_complete_2025}. 

Equational theories provide a formal framework for describing and reasoning about algebraic structures. In Boolean circuits, the fundamental composition operation comes from Boolean algebra, which allows the combination of logic gates with compatible bandwidths into more complex gates. As the output of one gate is the input of another \cite{mansinghka_stochastic_2008}, a composite corresponds to a total sequential gate. The law of abstraction \cite{turner_new_1979} defines modularity in the sense that total sequential gates can be further composed without the need of delving into their internals. 

To enable algebraic modularity, efforts have been made towards a full theory of compositional circuits. For example, \cite{lafont_towards_2003} proposes an algebra for circuit construction by treating gates as generators. Although it includes operations for total sequencing (by the usual composition of maps) and parallelising (over disjoint of maps), it does not support the algebraic construction of branching, partial sequential or iterative circuits. The works described in \cite{ghica_categorical_2016} and \cite{ghica_diagrammatic_2017} constitute a preliminary step towards filling this gap, but their treatment is ad-hoc and not fully formal since they rely on equations based on raw intuitions. To alleviate this, \cite{ghica_complete_2025} proposes to freely generate symmetric traced monoidal categories from circuit signatures. Arbitrary circuits can be composed using morphism composition as well as tensor and trace operations. Although total sequencing, parallelising and looping are addressed, there is no support for branching or partial sequencing. The reason for omitting partial sequencing is because the theory of symmetric monoidal categories does not explicitly separate data from control, as evidenced by the lack of separate wires (in string diagrams) for expressing control signals \cite{bonchi_string_2022}. In \cite{villoria_enriching_2025}, such categories are freely enriched over Eilenberg-Moore algebras to assign probabilities to wires to realise probabilistic branching. However, support for partial sequencing is still missing. Higher-dimensional automata \cite{fahrenberg_languages_2021,fahrenberg_kleene_2022} can accommodate partial sequencing, but only implicitly since they do not offer dedicated operators for this purpose. Control flow is implicit in the activation of state transitions and is not separated from data flow. Accordingly, independent and modular reasoning of control and data is not possible.

The modular, hierarchical structure of circuits may resemble works from category-theoretic cybernetics. For example, \cite{ehresmann_memory_2007} proposes the use of general colimits to bind components of a complex system in a hierarchical and bottom-up manner, albeit without any support for defining explicit control flow structures able to enact concrete causal and informational interactions. The work in \cite{arellanes_composition_2024} attempts to fill this gap by enabling the dynamic formation of spaces of total sequential composites, but without any colimit constructions. Sequential composition is implicit in the dynamical evolution of self-organising spaces.

Unlike the existing body of work, in our proposal there is no need of introducing ad-hoc circuit constructs (e.g., specialised gates) to realise sequencing, parallelising, branching or looping. Instead, control flow is achieved through algebraic operators that allow the compositional construction of complex circuits in a hierarchical, bottom-up manner. As a result, modularity is attained even in the presence of loops or non-determinism. Deterministic and non-deterministic circuits are compatible with each other. Due to separate constructs for control flow and data flow, independent reasoning of such dimensions is possible.

\section{Conclusions and Future Work}
\label{sec:conclusions}

In this paper, we introduced a (non-uniform) model for computing Boolean functions, referred to as (families of) control-driven Boolean circuits. We particularly propose colimit-based algebraic operators for the compositional construction of circuits into partial sequential, total sequential, parallel, branching or iterative composites. In Section \ref{sec:model}, we showed that the operators for parallelising and branching only satisfy commutativity and associativity, whereas partial sequencing only fulfils identity. Total sequencing obeys both identity and associativity. 

In Section \ref{sec:dynamics}, we explained that the execution of a control-driven circuit is governed by explicit control flow which defines the order in which computation units are evaluated. As a unit is a passive construct that synchronises control and Boolean values, it can only be evaluated when its input data is ready. By evaluation, we mean choosing a particular $k$-ary NAND function to be executed on the inputs; thereby, producing new Boolean values and control signals. Results governing the relation between composition and execution are part of future work. Particularly, we are interested in showing that termination can be guaranteed by construction by taking into account that composites are deadlock-free (cf. \cite{arellanes_compositional_2026}).

Section \ref{sec:equivalence} showed that every classical Boolean circuit can be converted into a control-driven one through the procedure given in the proof of Theorem \ref{th:classical-to-control}. Unfortunately, such a procedure yields a (top-level) composite $C$ whose (bottom-up) construction is not evident. Finding the minimal operations to form $C$ requires the definition of ``decomposition operators'' which we plan to provide in the near future. Another limitation of our procedure is that the number of control variables grows more than the number of edges of a classical circuit. In the future, we would like to alleviate this issue by offering an alternative approach to produce less complex circuits in terms of their size. For now, it suffices to just give a procedure to show that families of control-driven circuits are at least as powerful as their classical counterpart, i.e., they are capable of performing any deterministic algorithm (see Theorem \ref{th:computons-any-boolean}). In doing so, we indirectly prove that (compositionally constructed) control flow is inextricably present in ``any deterministic computation''. Although the proof builds upon the NAND basis $\{\uparrow\}$ to treat computation units homogeneously, other functionally complete sets can be used instead for extra expressivity, without changing our results. 

The direct implication of enabling explicit control is that, unlike classical Boolean circuits or any other classical model of computation (e.g., Turing machines or the lambda calculus), control-driven circuits explicitly determine the order in which (small or large) modular computations occur. Control-driven circuits are modular because there is no need of analysing their internal structure for composing them. To the best of our knowledge, this is the first comprehensive algebraic framework seamless integrating (total and partial) sequencing, parallelising, branching and iteration into Boolean circuits in a way that is well-defined and compositionally structured such as Boolean algebra itself. A characterisation of Boolean functions that can be represented via iteration is future work.

Although branching circuits are suitable to simulate randomness, they are not expressive enough to choose computational paths according to user-defined inputs. In the future, we plan to incorporate support for pre- and post-conditions into the semantics of computation units so as to realise user-defined branching. In this regard, we also plan to implement a visual programming language for the systematic construction of compositional control-driven circuits. For simulating probabilistic Turing machines, we expect to incorporate support for probabilistic choice; thereby, providing a more complete logical set to deal with the inherent randomness that (apparently) manifests in the physical world. 

\bibliographystyle{splncs04}
\bibliography{refs}
\newpage
\appendix

\section{Applications of Control-Driven Boolean Circuits}
\label{app:applications}

In this appendix, we present concrete examples of control-driven Boolean circuits, with the aim of demonstrating the applicability of our proposal. As a starting point, Figure \ref{fig:elementary-circuits} shows the equivalent circuits for the NOT, AND and OR functions. In Remark \ref{rem:nand-behaviour}, we mentioned that the $\uparrow_1$ operator is chosen from a family when there is exactly one Boolean input. As $\uparrow_1$ corresponds to NOT (see Definition \ref{def:cunit}), computation units reducing to that operation are annotated with the $\lnot$ symbol. To further enhance readability, units with ${k \geq 2}$ Boolean inputs are labelled with $\uparrow$, effectively denoting a $k$-ary NAND function.

\begin{figure}[H]
 \captionsetup{position=above}
 \centering 
 \subcaptionbox{NOT circuit}
 {
 \begin{tikzpicture}
 \computonPrimitive{1}{0}{0.6}{1.2}{$\lnot$}

 \flow{{0.2,1}}{$(0.2,1)+(0.8,0)$}{dashed}{};
 \node[draw=black,fill=white,inner sep=0pt,minimum size=3pt] (1q0) at (0.2,1) {}; 
 \din{1i1}{0.2}{0.6}{};

 \flow{{1.6,1}}{$(1.6,1)+(0.8,0)$}{dashed}{};
 \node[fill=black,inner sep=0pt,minimum size=3pt] (2q1) at (2.4,1) {};
 \dout{1o1}{1.6}{0.6}{};
 \end{tikzpicture}
 }
 \hspace{3.4cm}
 \subcaptionbox{AND circuit\label{fig:composite-and}}
 {
 \begin{tikzpicture}
\computonComposite{0.4}{-0.2}{4.4}{1.6};
\qmatch{3q}{2.5}{1}{};\flow{{1.45,1}}{{2.5,1}}{dashed}{};\flow{{2.6,1}}{{3.6,1}}{dashed}{};
\dmatch{3d}{2.5}{0.6}{}{above};\flow{{1.5,0.6}}{{2.5,0.6}}{}{};\flow{{2.56,0.6}}{{3.6,0.6}}{}{};

\computonPrimitive{1}{0}{0.6}{1.2}{$\uparrow$}
\qin{1q1}{0.2}{1}{};
\din{1i1}{0.2}{0.6}{};
\din{1i2}{0.2}{0.2}{};

\computonPrimitive{3.6}{0}{0.6}{1.2}{$\lnot$}
\qout{2q1}{4.2}{1}{};
\dout{2o1}{4.2}{0.6}{};
 \end{tikzpicture}
 }
 \hspace*{-65pt}
 \subcaptionbox{OR circuit\label{fig:composite-or}}
 {
 \begin{tikzpicture}
\computonComposite{0.4}{-0.4}{4.2}{3.4};
\computonComposite{0.8}{-0.2}{1.1}{3};

\computonPrimitive{1}{0}{0.6}{1.2}{$\lnot$}
\qin{1q1}{0.2}{1}{};
\din{1i1}{0.2}{0.6}{};

\computonPrimitive{1}{1.4}{0.6}{1.2}{$\lnot$}
\qin{2q1}{0.2}{2.4}{};
\din{2i1}{0.2}{2}{};

\qmatch{q1}{2.5}{1.9}{};\flow{{1.6,1.9}}{{2.5,1.9}}{dashed}{};\flow{{2.6,1.9}}{{3.6,1.9}}{dashed}{};
\dmatch{d1}{2.5}{1.5}{}{above};\flow{{1.6,1.5}}{{2.5,1.5}}{}{};\flow{{2.56,1.5}}{{3.4,1.5}}{}{};
\qmatch{q2}{2.5}{1}{};\flow{{1.6,1}}{{2.5,1}}{dashed}{};\flow{{2.6,1}}{{3.6,1}}{dashed}{};
\dmatch{d2}{2.5}{0.6}{}{above};\flow{{1.6,0.6}}{{2.5,0.6}}{}{};\flow{{2.56,0.6}}{{3.4,0.6}}{}{};

\computonPrimitive{3.4}{0.4}{0.6}{1.6}{$\uparrow$}
\qout{3q1}{4}{1.3}{};
\dout{3o1}{4}{0.9}{};
 \end{tikzpicture}
 }
 {
 \begin{tikzpicture}
\begin{scope}
\draw[draw=black,fill=white] (0,0.4) rectangle ++(0.12,0.25);
\node[inner sep=0pt,minimum size=0pt,label=right:{\scriptsize Computation unit}] at (0.2,0.5) {};
\node[draw=black,fill=white,inner sep=0pt,minimum size=3pt,label=right:{\scriptsize Control invar}] at (3.5,0.5) {};
\node[circle,draw=black,fill=white,inner sep=0pt,minimum size=3pt,label={right:{\scriptsize Boolean invar}}] at (6.8,0.5) {};
\end{scope}

\begin{scope}
\draw[dashed] (0,0) to node[pos=0.5]{\arrowflow} (0.6,0);
\node[inner sep=0pt,minimum size=3pt,label=right:{\scriptsize Control flow}] at (0.6,0) {};
\node[fill=black,inner sep=0pt,minimum size=3pt,label={right:{\scriptsize Control outvar}}] at (3.5,0) {};
\node[circle,fill=black,inner sep=0pt,minimum size=3pt,label={right:{\scriptsize Boolean outvar}}] at (6.8,0) {};
\end{scope}

\begin{scope}[yshift=-0.5cm]
\draw (0,0) to node[pos=0.5]{\arrowflow} (0.5,0);
\node[inner sep=0pt,minimum size=3pt,label=right:{\scriptsize Boolean flow}] at (0.5,0){};
\node[draw,fill=white,fill fraction={black}{0.5},inner sep=0pt,minimum size=3pt,label=right:{\scriptsize Control inoutvar}] at (3.5,0) {};
\node[circle,draw,fill=white,fill fraction={black}{0.5},inner sep=0pt,minimum size=3pt,label=right:{\scriptsize Boolean inoutvar}] at (6.8,0) {};
\end{scope}
\end{tikzpicture}
 }
 \caption{Elementary logical connectives as control-driven Boolean circuits.} 
 \label{fig:elementary-circuits}
\end{figure}
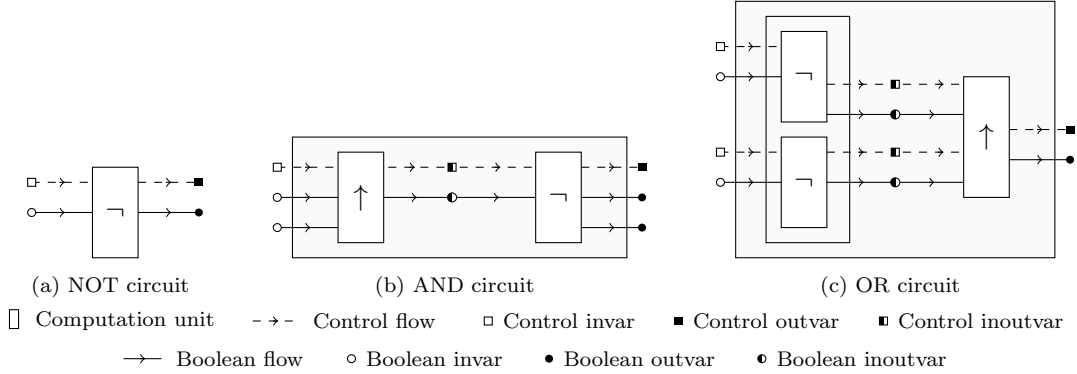

\vspace{-25pt}

In Figure \ref{fig:elementary-circuits}, it is clear that AND and OR take the form of composites, whereas NOT is just a primitive circuit consisting of a single computation unit with exactly one Boolean invar and one Boolean outvar. More precisely, AND is formed by composing a $2$-ary NAND circuit and a NOT primitive into a total sequential circuit. The OR composite is formed in two stages: First, two NOT primitives are put in parallel and then a total sequential composite is formed out of it together with a $2$-ary NAND function.

The elementary functions presented in Figure \ref{fig:elementary-circuits} can serve as building blocks to form even more complex circuits. For example, Figure \ref{fig:branching-example} displays a branching composite for a context-sensitive stochastic Boolean network \cite{liang_stochastic_2012}, which captures the oscillatory dynamics of a p53-Mdm2 regulatory pathway that is subject to molecular and genetic noise. The purpose of this circuit is to determine the next state of a gene by randomly selecting one out of four alternative sub-circuits. 

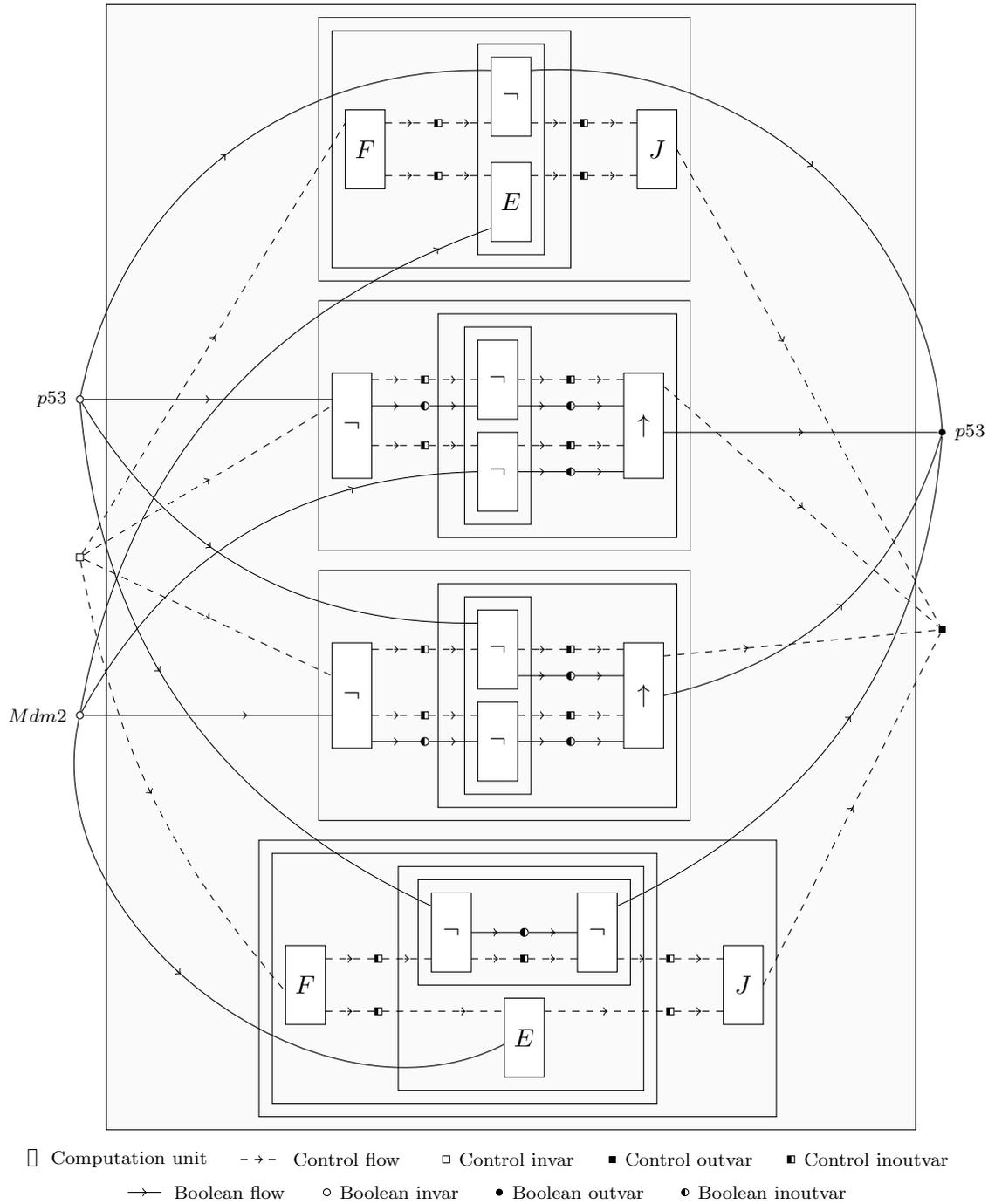
\begin{figure}[H]
 \centering 
 \begin{tikzpicture}[scale=0.83]

 \computonComposite{-2.6}{-0.6}{12.2}{17.1};

 
 \computonComposite{0.6}{12.3}{5.6}{4};
 \computonComposite{0.8}{12.5}{3.6}{3.6};
 \computonComposite{3}{12.7}{1}{3.2};
 
 \computonPrimitive{1}{13.7}{0.6}{1.2}{$F$} 
 \qmatch{f2o1}{2.4}{14.7}{};\flow{{1.6,14.7}}{f2o1}{dashed}{};\flow{f2o1}{{3.2,14.7}}{dashed}{};
 \qmatch{f2o2}{2.4}{13.9}{};\flow{{1.6,13.9}}{f2o2}{dashed}{};\flow{f2o2}{{3.2,13.9}}{dashed}{};
 \computonPrimitive{3.2}{14.5}{0.6}{1.2}{$\lnot$}
 \computonPrimitive{3.2}{12.9}{0.6}{1.2}{$E$} 
 \qmatch{n6o1}{4.6}{14.7}{};\flow{{3.8,14.7}}{n6o1}{dashed}{};\flow{n6o1}{{5.4,14.7}}{dashed}{};
 \qmatch{e2o1}{4.6}{13.9}{};\flow{{3.8,13.9}}{e2o1}{dashed}{};\flow{e2o1}{{5.4,13.9}}{dashed}{};
 \computonPrimitive{5.4}{13.7}{0.6}{1.2}{$J$} 
 
 
 \computonComposite{0.6}{8.2}{5.6}{3.8};
 \computonComposite{2.4}{8.4}{3.6}{3.4};
 \computonComposite{2.8}{8.6}{1}{3};

 \computonPrimitive{0.8}{9.3}{0.6}{1.6}{$\lnot$}
 
 \qmatch{g2o1}{2.2}{10.8}{};\flow{{1.4,10.8}}{g2o1}{dashed}{};\flow{g2o1}{{3,10.8}}{dashed}{};
 \dmatch{g2o2}{2.2}{10.4}{}{above};\flow{{1.4,10.4}}{g2o2}{}{};\flow{g2o2}{{3,10.4}}{}{};
 
 \qmatch{g2o3}{2.2}{9.8}{};\flow{{1.4,9.8}}{g2o3}{dashed}{};\flow{g2o3}{{3,9.8}}{dashed}{}; 

 \computonPrimitive{3}{8.8}{0.6}{1.2}{$\lnot$}
 \computonPrimitive{3}{10.2}{0.6}{1.2}{$\lnot$}

 \qmatch{n4o1}{4.4}{10.8}{};\flow{{3.6,10.8}}{n4o1}{dashed}{};\flow{n4o1}{{5.2,10.8}}{dashed}{};
 \dmatch{n4o2}{4.4}{10.4}{}{above};\flow{{3.6,10.4}}{n4o2}{}{};\flow{n4o2}{{5.2,10.4}}{}{};
 \qmatch{n5o1}{4.4}{9.8}{};\flow{{3.6,9.8}}{n5o1}{dashed}{};\flow{n5o1}{{5.2,9.8}}{dashed}{};
 \dmatch{n5o2}{4.4}{9.4}{}{above};\flow{{3.6,9.4}}{n5o2}{}{};\flow{n5o2}{{5.2,9.4}}{}{};

 \computonPrimitive{5.2}{9.3}{0.6}{1.6}{$\uparrow$}

 
 \computonComposite{0.6}{4.1}{5.6}{3.8};
 \computonComposite{2.4}{4.3}{3.6}{3.4};
 \computonComposite{2.8}{4.5}{1}{3};

 \computonPrimitive{0.8}{5.2}{0.6}{1.6}{$\lnot$}
 
 \qmatch{g1o1}{2.2}{6.7}{};\flow{{1.4,6.7}}{g1o1}{dashed}{};\flow{g1o1}{{3,6.7}}{dashed}{}; 
 
 \qmatch{g1o3}{2.2}{5.7}{};\flow{{1.4,5.7}}{g1o3}{dashed}{};\flow{g1o3}{{3,5.7}}{dashed}{};
 \dmatch{g1o4}{2.2}{5.3}{}{above};\flow{{1.4,5.3}}{g1o4}{}{};\flow{g1o4}{{3,5.3}}{}{};

 \computonPrimitive{3}{4.7}{0.6}{1.2}{$\lnot$}
 \computonPrimitive{3}{6.1}{0.6}{1.2}{$\lnot$}

 \qmatch{n2o1}{4.4}{6.7}{};\flow{{3.6,6.7}}{n2o1}{dashed}{};\flow{n2o1}{{5.2,6.7}}{dashed}{};
 \dmatch{n2o2}{4.4}{6.3}{}{above};\flow{{3.6,6.3}}{n2o2}{}{};\flow{n2o2}{{5.2,6.3}}{}{};
 \qmatch{n3o1}{4.4}{5.7}{};\flow{{3.6,5.7}}{n3o1}{dashed}{};\flow{n3o1}{{5.2,5.7}}{dashed}{};
 \dmatch{n3o2}{4.4}{5.3}{}{above};\flow{{3.6,5.3}}{n3o2}{}{};\flow{n3o2}{{5.2,5.3}}{}{};

 \computonPrimitive{5.2}{5.2}{0.6}{1.6}{$\uparrow$}
 
 
 \computonComposite{-0.3}{-0.4}{7.8}{4.2};
 \computonComposite{-0.1}{-0.2}{5.8}{3.8};
 \computonComposite{1.8}{0}{3.7}{3.4};
 \computonComposite{2.1}{1.6}{3.2}{1.6};
 
 \computonPrimitive{0.1}{1}{0.6}{1.2}{$F$} 
 \qmatch{f1o1}{1.5}{2}{};\flow{{0.7,2}}{f1o1}{dashed}{};\flow{f1o1}{{2.3,2}}{dashed}{};
 \qmatch{f1o2}{1.5}{1.2}{};\flow{{0.7,1.2}}{f1o2}{dashed}{};\flow{f1o2}{{4,1.2}}{dashed}{};
 \computonPrimitive{2.3}{1.8}{0.6}{1.2}{$\lnot$}
 \qmatch{n0o1}{3.7}{2}{};\flow{{2.9,2}}{n0o1}{dashed}{};\flow{n0o1}{{4.5,2}}{dashed}{};
 \dmatch{n0o2}{3.7}{2.4}{}{above};\flow{{2.9,2.4}}{n0o2}{}{};\flow{n0o2}{{4.5,2.4}}{}{};
 \computonPrimitive{4.5}{1.8}{0.6}{1.2}{$\lnot$} 
 
 \computonPrimitive{3.4}{0.2}{0.6}{1.2}{$E$} 
 \qmatch{nxo1}{5.9}{2}{};\flow{{5.1,2}}{nxo1}{dashed}{};\flow{nxo1}{{6.7,2}}{dashed}{};
 \qmatch{e1o1}{5.9}{1.2}{};\flow{{4,1.2}}{e1o1}{dashed}{};\flow{e1o1}{{6.7,1.2}}{dashed}{};
 \computonPrimitive{6.7}{1}{0.6}{1.2}{$J$}
 

 \dinplain{pin}{-3}{10.5}{$p53$};
 \flowdiag{pin}{{3.2,15.5}}{bend left=40}{}{pos=0.5,rotate=40};
 \flow{pin}{{{0.8,10.5}}}{}{};
 \flowdiag{pin}{{3,7.1}}{bend right=30}{}{pos=0.4,rotate=320};
 \flowdiag{pin}{{2.3,2.8}}{bend right=30}{}{pos=0.43,rotate=310}; 
 
 \qinplain{qin}{-3}{8.1}{}; 
 \flowdiag{qin}{{1,14.7}}{dashed}{}{pos=0.5,rotate=55};
 \flowdiag{qin}{{0.8,10.4}}{dashed}{}{pos=0.5,rotate=35};
 \flowdiag{qin}{{0.8,6.3}}{dashed}{}{pos=0.5,rotate=320};
 \flowdiag{qin}{{0.1,1.5}}{dashed,bend right=15}{}{pos=0.5,rotate=305};
 
 \dinplain{min}{-3}{5.7}{$Mdm2$};
 \flowdiag{min}{{3.2,13.1}}{bend left=30}{}{pos=0.92,rotate=35};
 \flowdiag{min}{{3,9.4}}{bend left=30}{}{pos=0.75,rotate=30}; 
 \flowdiag{min}{{0.8,5.7}}{}{}{pos=0.65};
 \flowdiag{min}{{3.4,0.7}}{bend right=65}{}{pos=0.45,rotate=320};

 \doutplain{pout}{9.2}{10}{$p53$};
 \flowdiag{{3.8,15.5}}{pout}{bend left=45}{}{rotate=320};
 \flow{5.8,10}{pout}{}{};
 \flowdiag{{5.8,6}}{pout}{bend right=30}{}{rotate=40};
 \flowdiag{{5.1,2.8}}{pout}{bend right=30}{}{rotate=50};

 \qoutplain{qout}{9.2}{7}{};
 \flowdiag{{6,14.3}}{qout}{dashed}{}{pos=0.4,rotate=305};
 \flowdiag{{5.8,10.7}}{qout}{dashed}{}{rotate=320};
 \flowdiag{{5.8,6.6}}{qout}{dashed}{}{pos=0.3,rotate=8};
 \flowdiag{{7.3,1.6}}{qout}{dashed}{}{rotate=55};
  
 \end{tikzpicture}
 \hspace{1cm}
 {
 \begin{tikzpicture}
\begin{scope}
\draw[draw=black,fill=white] (0,0.4) rectangle ++(0.12,0.25);
\node[inner sep=0pt,minimum size=0pt,label=right:{\scriptsize Computation unit}] at (0.2,0.5) {};
\node[draw=black,fill=white,inner sep=0pt,minimum size=3pt,label=right:{\scriptsize Control invar}] at (3.5,0.5) {};
\node[circle,draw=black,fill=white,inner sep=0pt,minimum size=3pt,label={right:{\scriptsize Boolean invar}}] at (6.8,0.5) {};
\end{scope}

\begin{scope}
\draw[dashed] (0,0) to node[pos=0.5]{\arrowflow} (0.6,0);
\node[inner sep=0pt,minimum size=3pt,label=right:{\scriptsize Control flow}] at (0.6,0) {};
\node[fill=black,inner sep=0pt,minimum size=3pt,label={right:{\scriptsize Control outvar}}] at (3.5,0) {};
\node[circle,fill=black,inner sep=0pt,minimum size=3pt,label={right:{\scriptsize Boolean outvar}}] at (6.8,0) {};
\end{scope}

\begin{scope}[yshift=-0.5cm]
\draw (0,0) to node[pos=0.5]{\arrowflow} (0.5,0);
\node[inner sep=0pt,minimum size=3pt,label=right:{\scriptsize Boolean flow}] at (0.5,0){};
\node[draw,fill=white,fill fraction={black}{0.5},inner sep=0pt,minimum size=3pt,label=right:{\scriptsize Control inoutvar}] at (3.5,0) {};
\node[circle,draw,fill=white,fill fraction={black}{0.5},inner sep=0pt,minimum size=3pt,label=right:{\scriptsize Boolean inoutvar}] at (6.8,0) {};
\end{scope}
\end{tikzpicture}
 }
 \caption{A branching circuit that captures the computational behaviour of the context-sensitive stochastic Boolean network presented in \cite{liang_stochastic_2012}. For increased readability, computation units are annotated to make evident the behaviour they perform. Here, the symbols $\lnot$, $\uparrow$, $E$, $F$ and $J$ denote NOT, NAND, Boolean eater, fork and join functions. All of them are well-defined as per Definitions \ref{def:cunit} and \ref{def:transition}.} 
 \label{fig:branching-example}
\end{figure}

For the first alternative, we first compose a parallel composite for the simultaneous activation of a NOT function and a ``Boolean eater''. In Section \ref{sec:dynamics}, we mentioned that a Boolean eater is a circuit that possesses a single computation unit with any number of Boolean invars and no Boolean outvars at all. Operationally, the unique unit chooses a function from the family but, since there are no Boolean outvars, only control signals are produced (see Definition \ref{def:transition}). This class of circuits is useful to wrap interfaces.

After composing our parallel composite, we form a partial sequential circuit by taking a ``fork'' as the left operand and the parallel structure as the right one. The result is then composed with a ``join'' primitive into a partial sequential composite that captures the whole behaviour of the first alternative branch.\footnote{The first alternative branch demonstrates the expressivity of our model to define computation units that do not receive or produce any Boolean values, but just fork or join control signals. Here, the unique computation units of the fork and join circuits are annotated with $F$ and $J$, respectively.}

The second and third alternatives are total sequential circuits for the invocation of a NOT primitive and an OR composite, in that order. The only difference is in the way they process Boolean information (cf. the leftmost NOT functions and the inner parallel composites). Finally, at the bottom-level, we have a composite which is similar to the first alternative. The only difference is that, rather than just computing NOT, the inner parallel composite activates a total sequential circuit of two NOT primitives, effectively passing information unchanged. In other words, the last branch defines a buffer to ``echo'' a $p53$ Boolean value.

It might not seem entirely obvious that the process of choosing one alternative or another is entirely governed by control flow. To elucidate this, observe that the initial state of our branching composite injects Boolean values into the $p53$ and $Mdm2$ variables as well as a control signal into the only control invar. As control flow arrows run from the control invar to the forks and to the leftmost NOT functions, all these units will be enabled initially (see Definition \ref{def:computation-unit-status}), but only one of them will be chosen for evaluation (as per Definition \ref{def:computation-unit-ready-reduction}). In other words, the initial state triggers only one alternative composite in the next execution step. 

Due to the modularity property of the proposed model, our branching composite can be used as a component to form even more complex circuits. To explain the modularity property in a more concrete manner, let us construct an iterative circuit to define the toggle action of a clocked set-reset flip-flop. First, let us reuse the buffer we use in the last alternative of our branching circuit, which we depict in Figure \ref{fig:buffer} for convenience. 

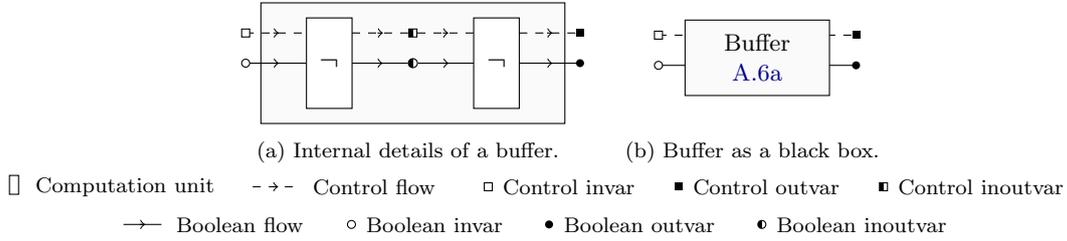
\begin{figure}[H]
 \captionsetup{position=above}
 \centering 
 \subcaptionbox{Internal details of a buffer.\label{fig:buffer}}
 {
 \begin{tikzpicture} 
 \computonComposite{1.7}{1.6}{4}{1.6}; 
 \qin{qin}{1.5}{2.8}{};
 \din{din}{1.5}{2.4}{};
 \computonPrimitive{2.3}{1.8}{0.6}{1.2}{$\lnot$}
 \qmatch{n0o1}{3.7}{2.8}{};\flow{{2.9,2.8}}{n0o1}{dashed}{};\flow{n0o1}{{4.5,2.8}}{dashed}{};
 \dmatch{n0o2}{3.7}{2.4}{}{above};\flow{{2.9,2.4}}{n0o2}{}{};\flow{n0o2}{{4.5,2.4}}{}{};
 \computonPrimitive{4.5}{1.8}{0.6}{1.2}{$\lnot$} 
 \qout{qout}{5.1}{2.8}{};
 \dout{dout}{5.1}{2.4}{};  
 \end{tikzpicture}
 }
 \hspace{1cm}
 \subcaptionbox{Buffer as a black box.}[0.45\textwidth]
 {
 \begin{tikzpicture} 
 \qin{qin}{1.5}{1.05}{};
 \din{din}{1.5}{0.65}{};
 \qout{qout}{3.3}{1.05}{};
 \dout{dout}{3.3}{0.65}{};  
 \computonComposite{1.85}{0.25}{1.9}{1};
 \node at (2.8,0.95){\footnotesize Buffer};
 \node at (2.8,0.55){\footnotesize\ref{fig:buffer}};
 \node at (2.8,0){};
 \end{tikzpicture}
 }
 \caption{A buffer is a total sequential circuit of two NOT primitives. Here, we demonstrate the modularity property of control-driven circuits. A similar black-boxing can be done with any of the composites displayed in Figures \ref{fig:elementary-circuits} and \ref{fig:branching-example}.} 
 \label{fig:modularity}
\end{figure}

This time we reuse this structure to build the partial sequential composite displayed in Figure \ref{fig:composite-entry}. As \ref{fig:composite-entry} will be the entry point to our iterative computation, its purpose is to trigger three buffers in parallel for echoing the values of three variables, namely $R$ (reset), $Q_n$ (current state) and $S$ (set).

These three values are intended to be consumed by the composite of Figure \ref{fig:composite-iteration}, which captures the behaviour of our main recurrent computation. Like \ref{fig:composite-entry}, this building block is constructed from existing structures; in this case, the AND and OR composites from Figure \ref{fig:elementary-circuits}. Operationally, \ref{fig:composite-iteration} realises the following equation to determine the next flip-flop state:

\begin{equation}\label{eq:next-state}
 Q_{n+1}=S~\lor~(\lnot R~\land~Q_n)   
\end{equation}

Fully realising Equation \ref{eq:next-state} requires the negation of $R$, before applying AND on the result and the current state $Q_n$. In other words, a partial sequential circuit needs to be formed to compute NOT and AND, in that order (see Figure \ref{fig:composite-iteration}). 

\begin{figure}[!h]
 \centering 
 \subcaptionbox{A composite to be used as the entry to our iterative computation. This is formed by first composing three buffers in parallel and then using a fork and the result as left and right operands for a partial sequential circuit. The resulting sequential circuit and a join primitive are then used as left and right operands to form an even more complex partial sequential structure.\label{fig:composite-entry}}
 {
 \begin{tikzpicture}[scale=0.75]

 \computonComposite{-0.5}{-0.6}{7.6}{4.7};
 \computonComposite{-0.1}{-0.4}{5}{4.3};
 \computonComposite{2.1}{-0.2}{2.3}{3.9};

 \qin{fin}{-0.7}{1.8}{};
 \dinplain{rin}{-0.7}{3.3}{$R$};\flow{rin}{{2.3,3.3}}{}{};
 \dinplain{qin}{-0.7}{2.5}{$Q_n$};\flowdiag{qin}{{2.3,2.05}}{bend left=20}{}{};
 \dinplain{sin}{-0.7}{0.2}{$S$};\flow{sin}{{2.3,0.2}}{}{};
 
 \computonPrimitive{0.1}{1.2}{0.6}{1.2}{$F$} 
 \qmatch{f1o1}{1.5}{2.7}{};\flowdiag{{0.7,1.8}}{f1o1}{dashed}{}{rotate=45};\flow{f1o1}{{2.3,2.7}}{dashed}{};
 \qmatch{f1o2}{1.5}{1.8}{};\flow{{0.7,1.8}}{f1o2}{dashed}{};\flow{f1o2}{{2.3,1.8}}{dashed}{};
 \qmatch{f1o3}{1.5}{0.8}{};\flowdiag{{0.7,1.8}}{f1o3}{dashed}{}{rotate=315};\flow{f1o3}{{2.3,0.8}}{dashed}{};
 
 \qmatch{nxo1}{5.1}{2.7}{};\flow{{4.3,2.7}}{nxo1}{dashed}{};\flowdiag{nxo1}{{5.9,1.8}}{dashed}{}{rotate=315};
 \qmatch{e1o1}{5.1}{1.8}{};\flow{{4.3,1.8}}{e1o1}{dashed}{};\flow{e1o1}{{5.9,1.8}}{dashed}{};
 \qmatch{e1o2}{5.1}{0.8}{};\flow{{4.3,0.8}}{e1o2}{dashed}{};\flowdiag{e1o2}{{5.9,1.8}}{dashed}{}{rotate=45};
 \computonPrimitive{5.9}{1.2}{0.6}{1.2}{$J$}

 \begin{scope}[xshift=2.3cm,yshift=2.5cm] 
    \computonComposite{0}{0}{1.9}{1};
    \node at (0.95,0.7){\footnotesize Buffer};
    \node at (0.95,0.3){\footnotesize\ref{fig:buffer}};
 \end{scope}
 \begin{scope}[xshift=2.3cm,yshift=1.25cm] 
    \computonComposite{0}{0}{1.9}{1};
    \node at (0.95,0.7){\footnotesize Buffer};
    \node at (0.95,0.3){\footnotesize\ref{fig:buffer}};
 \end{scope}
 \begin{scope}[xshift=2.3cm,yshift=0cm] 
    \computonComposite{0}{0}{1.9}{1};
    \node at (0.95,0.7){\footnotesize Buffer};
    \node at (0.95,0.3){\footnotesize\ref{fig:buffer}};
 \end{scope}

 \qout{fout}{6.5}{1.8}{};
 \doutplain{rout}{6.5}{3.3}{$R$};\flow{{4.2,3.3}}{rout}{}{};
 \doutplain{qout}{6.5}{2.5}{$Q_n$};\flowdiag{{4.2,2.05}}{qout}{bend left=20}{}{};
 \doutplain{sout}{6.5}{0.2}{$S$};\flow{{4.2,0.2}}{sout}{}{};

 \begin{scope}[xshift=10.5cm,yshift=1.2cm] 
    \qin{qin}{-0.35}{1.4}{};
    \din{rin}{-0.35}{1}{$R$};
    \din{qin}{-0.35}{0.6}{$Q_n$};        
    \din{sin}{-0.35}{0.2}{$S$};
    \qout{qout}{1.45}{1.4}{};
    \dout{rout}{1.45}{1}{$R$};
    \dout{qout}{1.45}{0.6}{$Q_n$};        
    \dout{sout}{1.45}{0.2}{$S$};
    \computonComposite{0}{0}{1.9}{1.6};
    \node at (0.9,1){\footnotesize Entry};
    \node at (0.9,0.6){\footnotesize\ref{fig:composite-entry}};
 \end{scope}
 \end{tikzpicture}
 }
 \hspace{1cm}
 \subcaptionbox{A composite to be used as the main iterative computation, which is formed by first composing NOT and AND into a partial sequential circuit and then using the result and OR as left and right operands, respectively, for a more complex partial sequential circuit.\label{fig:composite-iteration}}
 {
 \begin{tikzpicture}[scale=0.75]
 \computonComposite{0.2}{-0.4}{8.8}{2.8};
 \computonComposite{0.6}{0.2}{4.5}{2};
 
 \qin{cin}{0}{1.8}{};    
 \din{rin}{0}{1.4}{$R$};
 \dinplain{qin}{0}{0.6}{$Q_n$};\flow{qin}{{3,0.6}}{}{};
 \dinplain{sin}{0}{0}{$S$};\flowdiag{sin}{{6.5,0.6}}{bend right=16}{}{};
 \computonPrimitive{0.8}{0.8}{0.6}{1.2}{$\lnot$} 
 \qmatch{no1}{2.2}{1.8}{};\flow{{1.4,1.8}}{no1}{dashed}{};\flow{no1}{{3,1.8}}{dashed}{};
 \dmatch{no2}{2.2}{1.4}{}{below};\flow{{1.4,1.4}}{no2}{}{};\flow{no2}{{3,1.4}}{}{};
 \begin{scope}[xshift=3cm,yshift=0.4cm]     
    \computonComposite{0}{0}{1.9}{1.6};
    \node at (0.9,1){\small $\land$};
    \node at (0.9,0.6){\small\ref{fig:composite-and}};
 \end{scope}
 \qmatch{ao1}{5.7}{1.8}{};\flow{{4.9,1.8}}{ao1}{dashed}{};\flow{ao1}{{6.5,1.8}}{dashed}{};
 \dmatch{ao2}{5.7}{1.4}{}{below};\flow{{4.9,1.4}}{ao2}{}{};\flow{ao2}{{6.5,1.4}}{}{}; 
 \begin{scope}[xshift=6.5cm,yshift=0.4cm]     
    \computonComposite{0}{0}{1.9}{1.6};
    \node at (0.9,1){\small $\lor$};
    \node at (0.9,0.6){\small\ref{fig:composite-or}};
 \end{scope}
 \qout{cout}{8.4}{1.8}{};    
 \dout{qout}{8.4}{1.4}{$Q_{n+1}$};

 \begin{scope}[xshift=11.5cm,yshift=0.2cm] 
    \qin{qin}{-0.35}{1.4}{};    
    \din{rin}{-0.35}{1}{$R$};
    \din{qin}{-0.35}{0.6}{$Q_n$};        
    \din{sin}{-0.35}{0.2}{$S$};
    \qout{cout}{1.45}{1.4}{};
    \dout{qout}{1.45}{1}{$Q_{n+1}$};        
    \computonComposite{0}{0}{1.9}{1.6};
    \node at (0.95,1){\footnotesize Action};
    \node at (0.95,0.6){\footnotesize\ref{fig:composite-iteration}};
 \end{scope}
 \end{tikzpicture}
 }
 \hspace{1cm}
 \subcaptionbox{A composite to be used for the calculation of the next state, which simply is a partial sequential circuit for the execution of two NOT primitives.\label{fig:composite-next}}
 {
 \begin{tikzpicture}[scale=0.75]
 \node at (0,1.3){};
 \computonComposite{3.2}{1.3}{4}{1.9}; 
 \qin{qin}{3}{2.8}{};
 \din{din}{3}{2.4}{$Q_{n+1}$};
 \computonPrimitive{3.8}{1.8}{0.6}{1.2}{$\lnot$}
 \qmatch{n0o1}{5.2}{2.8}{};\flow{{4.4,2.8}}{n0o1}{dashed}{};\flow{n0o1}{{6,2.8}}{dashed}{};
 \dmatch{n0o2}{5.2}{2.4}{}{below};\flow{{4.4,2.4}}{n0o2}{}{};\flow{n0o2}{{6,2.4}}{}{};
 \computonPrimitive{6}{1.8}{0.6}{1.2}{$\lnot$} 
 \qout{cout}{6.6}{2.8}{};
 \dout{rout}{6.6}{2.4}{$R_{n+1}$};
 \dout{qout}{6.6}{2}{$Q_{n+1}$};
 \doutplain{sout}{6.6}{1.6}{$S_{n+1}$};\flowdiag{{4.4,2}}{sout}{bend right=12}{}{}; 

 \begin{scope}[xshift=12.2cm,yshift=1.5cm] 
    \qin{cin}{-0.35}{1.4}{};    
    \din{rin}{-0.35}{1}{$Q_{n+1}$};
    \qout{cout}{1.45}{1.4}{};
    \dout{rout}{1.45}{1}{$R_{n+1}$};
    \dout{qout}{1.45}{0.6}{$Q_{n+1}$};
    \dout{sout}{1.45}{0.2}{$S_{n+1}$};
    \computonComposite{0}{0}{1.9}{1.6};
    \node at (0.95,1){\footnotesize Next};
    \node at (0.95,0.6){\footnotesize\ref{fig:composite-next}};
 \end{scope}
 \end{tikzpicture}
 }
 \caption{Composite building blocks for the tail-iterative circuit depicted in Figure \ref{fig:iterative-example}. Internal structures are shown on the left and the corresponding modular black boxes are displayed on the right.} 
 \label{fig:iterative-blocks}
\end{figure}

As OR cannot be applied before finding ${\lnot R~\land~Q_n}$, the newly created composite needs to be invoked before passing its result to the OR circuit which, in addition, receives an $S$-value from the external world. In effect, this constitutes another partial sequential circuit, which we fully depict in Figure \ref{fig:composite-iteration}.

The result of \ref{fig:composite-iteration}, $Q_{n+1}$, is intended to be passed on to another circuit to determine the set and reset values for the next iteration. As we aim to implement a cyclic toggle action, we define $R_{n+1}=Q_{n+1}$ and $S_{n+1}=\lnot Q_{n+1}$, and simply echo $Q_{n+1}$ to the next iteration. Accordingly, we construct the partial sequential circuit of Figure \ref{fig:composite-next}, which behaves almost like a buffer, with the difference that $\lnot Q_{n+1}$ is sent to the external world immediately after computing it.

All the building blocks from Figure \ref{fig:iterative-blocks} and an eater primitive are used as operands for the colimit operation described in Definition \ref{def:tail}, effectively forming the tail-iterative circuit from Figure \ref{fig:iterative-example} which captures the behaviour of our cyclic toggle action. A glance at Definition \ref{def:tail} reveals that tail-iterative composites can only be created from sound circuits. By simply following the flows from invars to outvars in each building block, one can easily verify soundness. Eater primitives are always sound as per Proposition \ref{prop:functional-sound}. 

\begin{figure}[H]
 \centering 
 {
 \begin{tikzpicture}[scale=0.8]
 \computonComposite{0.2}{-0.2}{8.8}{4};
 \begin{scope}[xshift=0cm,yshift=1cm] 
    \qin{qin}{0}{1.4}{};    
    \din{rin}{0}{1}{$R$};
    \din{qin}{0}{0.6}{$Q_n$};    
    \din{sin}{0}{0.2}{$S$};
    \computonComposite{0.8}{0}{1.9}{1.6};
    \node at (1.7,1){\footnotesize Entry};
    \node at (1.7,0.6){\footnotesize\ref{fig:composite-entry}};
 \end{scope}
 \qmatch{eo3}{3.5}{2.4}{};\flow{{2.7,2.4}}{eo3}{dashed}{};\flowdiag{eo3}{{4.3,1.4}}{dashed}{}{rotate=310,pos=0.7}; 
 \dmatch{eo1}{3.5}{2}{}{below};\flow{{2.7,2}}{eo1}{}{};\flowdiag{eo1}{{4.3,1}}{}{}{rotate=310,pos=0.7};
 \dmatch{eo2}{3.5}{1.6}{}{below};\flow{{2.7,1.6}}{eo2}{}{};\flowdiag{eo2}{{4.3,0.6}}{}{}{rotate=310,pos=0.7};
 \dmatch{eo4}{3.5}{1.2}{}{below};\flow{{2.7,1.2}}{eo4}{}{};\flowdiag{eo4}{{4.3,0.2}}{}{}{rotate=310,pos=0.7};
 \begin{scope}[xshift=4.3cm,yshift=0cm]         
    \computonComposite{0}{0}{1.9}{1.6};
    \node at (0.95,1){\footnotesize Action};
    \node at (0.95,0.6){\footnotesize\ref{fig:composite-iteration}};
 \end{scope}
 \begin{scope}[xshift=4.3cm,yshift=2cm]
    \computonComposite{0}{0}{1.9}{1.6};
    \node at (0.95,1){\footnotesize Next};
    \node at (0.95,0.6){\footnotesize\ref{fig:composite-next}};
 \end{scope}
 \flowdiag{{4.3,3.4}}{eo3}{dashed}{}{rotate=225,pos=0.7};
 \flowdiag{{4.3,3}}{eo1}{}{}{rotate=225};
 \flowdiag{{4.3,2.6}}{eo2}{}{}{rotate=225,pos=0.25};
 \flowdiag{{4.3,2.2}}{eo4}{}{}{rotate=225,pos=0.25};

 \qmatch{ao1}{7}{2}{};
 \flowdiag{{6.2,1.2}}{ao1}{dashed}{}{rotate=45};
 \flowdiag{ao1}{{6.2,3}}{dashed}{}{rotate=135}; 
 \flow{ao1}{{7.8,2}}{dashed}{};
 \dmatch{ao2}{7}{1.6}{}{below};
 \flowdiag{{6.2,0.8}}{ao2}{}{}{rotate=45};
 \flowdiag{ao2}{{6.2,2.6}}{}{}{rotate=135};
 \flow{ao2}{{7.8,1.6}}{}{};
 \computonPrimitive{7.8}{1.2}{0.6}{1.2}{$E$}
 \qout{eo}{8.4}{1.8}{};
 \end{tikzpicture}
 }
 {
 \begin{tikzpicture}
\begin{scope}
\draw[draw=black,fill=white] (0,0.4) rectangle ++(0.12,0.25);
\node[inner sep=0pt,minimum size=0pt,label=right:{\scriptsize Computation unit}] at (0.2,0.5) {};
\node[draw=black,fill=white,inner sep=0pt,minimum size=3pt,label=right:{\scriptsize Control invar}] at (3.5,0.5) {};
\node[circle,draw=black,fill=white,inner sep=0pt,minimum size=3pt,label={right:{\scriptsize Boolean invar}}] at (6.8,0.5) {};
\end{scope}

\begin{scope}
\draw[dashed] (0,0) to node[pos=0.5]{\arrowflow} (0.6,0);
\node[inner sep=0pt,minimum size=3pt,label=right:{\scriptsize Control flow}] at (0.6,0) {};
\node[fill=black,inner sep=0pt,minimum size=3pt,label={right:{\scriptsize Control outvar}}] at (3.5,0) {};
\node[circle,fill=black,inner sep=0pt,minimum size=3pt,label={right:{\scriptsize Boolean outvar}}] at (6.8,0) {};
\end{scope}

\begin{scope}[yshift=-0.5cm]
\draw (0,0) to node[pos=0.5]{\arrowflow} (0.5,0);
\node[inner sep=0pt,minimum size=3pt,label=right:{\scriptsize Boolean flow}] at (0.5,0){};
\node[draw,fill=white,fill fraction={black}{0.5},inner sep=0pt,minimum size=3pt,label=right:{\scriptsize Control inoutvar}] at (3.5,0) {};
\node[circle,draw,fill=white,fill fraction={black}{0.5},inner sep=0pt,minimum size=3pt,label=right:{\scriptsize Boolean inoutvar}] at (6.8,0) {};
\end{scope}
\end{tikzpicture}
 }
 \caption{A tail-iterative circuit to compute the toggle action of a clocked SR flip-flop. Like any other iterative composite, this one is not built from sequencing, parallelising or branching. Instead, a self-contained colimit operation based on pushout and coproduct constructions is used at once, as described in Definition \ref{def:tail}. For this reason, we only show the building blocks and the top-level composite. Relative to the notation used in Definition \ref{def:tail}, the building blocks \emph{Entry}, \emph{Next}, \emph{E} and \emph{Action} correspond to the symbols $\lambda_2$, $\lambda_3$, $\lambda_4$ and $\lambda$, respectively.} 
 \label{fig:iterative-example}
\end{figure}
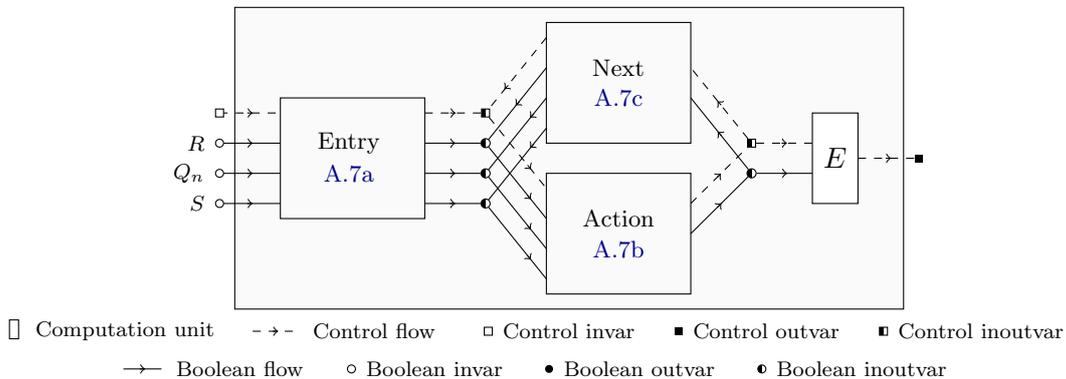

Table \ref{tab:iterative-table} shows the characteristic table of our iterative circuit, where it is clear that, having ${Q_n=S=R=0}$ as the initial state, will make our circuit toggle from the second iteration by triggering a set operation. If we rather start with $Q_n=1$ and $S=R=0$, the composite will also toggle from the second iteration. But this time, from a reset operation. Starting with an invalid state will always trigger reset in the second step. Once a set or a reset state is reached, the iterative circuit toggles until termination. Termination occurs randomly at any iteration, effectively simulating the moment clock signals are stopped being produced. 

\begin{table}[H]
\centering
\resizebox{\textwidth}{!}{
\begin{tblr}{
    colspec = {|c|c|c|c|c|c|c|}
  }
 \hline
 \footnotesize$S_n$ & \footnotesize$R_n$ & \footnotesize$Q_n$ & \footnotesize$S_{n+1}=\lnot Q_{n+1}$ & \footnotesize$R_{n+1}=Q_{n+1}$ & \footnotesize$Q_{n+1}=S\lor(\lnot R \land Q_n)$ & \\
 \hline
 \footnotesize0 & \footnotesize0 & \footnotesize0 & \footnotesize1 & \footnotesize0 & \footnotesize0 & \SetCell[r=2]{m}\footnotesize Hold\\
 \hline
 \footnotesize0 & \footnotesize0 & \footnotesize1 & \footnotesize0 & \footnotesize1 & \footnotesize1 & \\
 \hline
 \footnotesize0 & \footnotesize1 & \footnotesize0 & \footnotesize1 & \footnotesize0 & \footnotesize0 & \SetCell[r=2]{m} \footnotesize Reset\\
 \hline
 \footnotesize0 & \footnotesize1 & \footnotesize1 & \footnotesize1 & \footnotesize0 & \footnotesize0 & \\
 \hline
 \footnotesize1 & \footnotesize0 & \footnotesize0 & \footnotesize0 & \footnotesize1 & \footnotesize1 & \SetCell[r=2]{m}\footnotesize Set\\
 \hline
 \footnotesize1 & \footnotesize0 & \footnotesize1 & \footnotesize0 & \footnotesize1 & \footnotesize1 & \\
 \hline
 \footnotesize1 & \footnotesize1 & \footnotesize0 & \footnotesize0 & \footnotesize1 & \footnotesize1 & \SetCell[r=2]{m}\footnotesize Invalid\\
 \hline
 \footnotesize1 & \footnotesize1 & \footnotesize1 & \footnotesize0 & \footnotesize1 & \footnotesize1 & \\
 \hline
\end{tblr}
}
\caption{Characteristic table of the tail-iterative circuit from Figure \ref{fig:iterative-example}.}
\label{tab:iterative-table}
\end{table}

\vspace{-12pt}

Although our tail-iterative circuit checks for termination immediately after executing \ref{fig:composite-iteration}, it is possible to define a head-iterative composite instead. To accomplish this, we can use the same building blocks as operands. But this time, on the operator described in Definition \ref{def:head}.

\end{document}